\definecolor{orcidlogocol}{HTML}{A6CE39} \tikzset{ orcidlogo/.pic={ \fill[orcidlogocol] svg{M256,128c0,70.7-57.3,128-128,128C57.3,256,0,198.7,0,128C0,57.3,57.3,0,128,0C198.7,0,256,57.3,256,128z}; \fill[white] svg{M86.3,186.2H70.9V79.1h15.4v48.4V186.2z} svg{M108.9,79.1h41.6c39.6,0,57,28.3,57,53.6c0,27.5-21.5,53.6-56.8,53.6h-41.8V79.1z M124.3,172.4h24.5c34.9,0,42.9-26.5,42.9-39.7c0-21.5-13.7-39.7-43.7-39.7h-23.7V172.4z} svg{M88.7,56.8c0,5.5-4.5,10.1-10.1,10.1c-5.6,0-10.1-4.6-10.1-10.1c0-5.6,4.5-10.1,10.1-10.1C84.2,46.7,88.7,51.3,88.7,56.8z}; } } 
\newcommand\orcid[1]{\href{https://orcid.org/#1}{\mbox{\scalerel*{
\begin{tikzpicture}[yscale=-1,transform shape]
\pic{orcidlogo};
\end{tikzpicture}
}{|}}}}
\newtheorem{thm}{Theorem}
\newtheorem{prop}[thm]{Proposition}
\newtheorem{definition}[thm]{Definition}
\newcommand{\algmargin}{\the\ALG@thistlm}
\newlength{\whilewidth}
\algnewcommand{\LineComment}[1]{\State \(\triangleright\) #1}
\algnewcommand{\parState}[1]{\State%
  \parbox[t]{\dimexpr\linewidth-\algmargin}{\strut #1\strut}}
\providecommand{\keywords}[1]{\textbf{\textit{Index terms---}} #1}
\newcommand{\changedOne}[1]{{#1}}
\begin{document}

\title{Structural Equivalence in Subgraph Matching}
\author{Dominic Yang \orcid{0000-0002-9453-2299}, Yurun Ge \orcid{0000-0001-9666-9988
}, Thien Nguyen \orcid{0000-0002-5972-7189}, Denali Molitor \orcid{0000-0001-9750-3533}, Jacob D. Moorman \orcid{0000-0002-4291-1561}, Andrea L. Bertozzi \orcid{0000-0003-0396-7391}, {\it Member, IEEE}.\IEEEcompsocitemizethanks{\IEEEcompsocthanksitem Dominic Yang, Yurun Ge, Denali Molitor, Jacob Moorman, and Andrea L. Bertozzi are with the Department of Mathematics, University of California, Los Angeles, Los Angeles, CA, 90095. E-mail: domyang@math.ucla.edu, yurun@math.ucla.edu, dmolitor@math.ucla.edu, jdmoorman@math.ucla.edu, bertozzi@math.ucla.edu. \IEEEcompsocthanksitem Thien Nguyen is with the department of Computer Science at Northeastern University, Boston, MA, 02115. E-mail: nguyen.thien@northeastern.edu}}
\date{\today}
\maketitle

\begin{abstract}
Symmetry plays a major role in subgraph matching both in the description of the graphs in question and in how it confounds the search process. This work addresses how to quantify these effects and how to use symmetries to increase the efficiency of subgraph isomorphism algorithms. 
We introduce rigorous definitions of structural equivalence and establish conditions for when it can be safely used to generate more solutions.
We illustrate how to adapt standard search routines to utilize these symmetries to accelerate search and compactly describe the solution space.
We then adapt a state-of-the-art solver and perform a comprehensive series of tests to demonstrate these methods' efficacy on a standard benchmark set. 
We extend these methods to multiplex graphs and present results on large multiplex networks drawn from transportation systems, social media, adversarial attacks, and knowledge graphs.
\end{abstract}
\keywords{Subgraph isomorphism, subgraph matching, multiplex network, structural equivalence, graph structure}
\section{Introduction}

The subgraph isomorphism problem (also called the subgraph matching problem) specifies a small graph (the \textbf{template})
to find as a subgraph within a larger (\textbf{world}) graph.
This problem has been well-studied especially in the pattern recognition community. 
The surveys \cite{foggia2014graph}, \cite{Conte2007}, and \cite{emmert2016fifty} explain the broad variety of techniques used as well as applications including handwriting recognition \cite{sanfeliu1983distance}, face recognition \cite{wiskott1997}, biomedical uses \cite{dumay1992consistent}, sudoku puzzles and adversarial activity \cite{moorman2021subgraph}. 
More recently, subgraph matching arises as a component in motif discovery \cite{micale2018fast, ribeiro2014discovering}, where frequent subgraphs are uncovered for graph analysis in domains including social networks and biochemical data. 
Additionally, subgraph matching is relevant in knowledge graph searches, wherein incomplete factual statements are completed by querying a knowledge database \cite{auer2007dbpedia, inexact20}. 

Networks are present in many applications; hence, the ability to detect interesting structures, i.e., subgraphs, apparent in the networks bears great importance. We investigate subgraph matching on a wide variety of networks, simulated and real, single channel and multichannel, ranging from hundreds to millions of nodes. These data sets include biochemical reactions \cite{gay2014subgraph}, pattern recognition \cite{damiand2011polynomial}, transportation networks \cite{fan2012graph}, social networks \cite{de2013anatomy}, and knowledge graphs \cite{zucker2021leveraging}.

This paper addresses exact subgraph isomorphisms: given a template $G_T = (V_T, E_T)$, and a world $G_W = (V_W, E_W)$, find a mapping $f: V_T \rightarrow V_W$ that is both injective and respects the structure of $G_T$. For the latter property to hold, we require that if $(t_1, t_2) \in E_T$, then we must have $(f(t_1), f(t_2)) \in E_W$. If this is true, we say that $f$ is \textbf{edge-preserving}. We define subgraph isomorphism as follows:

\begin{definition}
    Given a template $G_T = (V_T, E_T)$ and a world $G_W = (V_W, E_W)$, a map $f:V_T \rightarrow V_W$ is a \textbf{subgraph isomorphism} if and only if $f$ is injective and edge-preserving.
\end{definition}


Throughout this paper, we use the terms subgraph isomorphism and subgraph matching interchangeably.
Related terms are \textbf{subgraph homomorphism} which relaxes the injectivity requirement, and \textbf{induced subgraph isomorphism} which also requires the map to be non-edge-preserving (if $(u,v) \notin E_T, (f(u),f(v)) \notin E_W$). 

We are interested in the subgraph matching problem (SMP) \cite{moorman2021subgraph}:
\begin{definition}[\textbf{Subgraph Matching Problem}]
    Given a template graph $G_T$ and a world graph $G_W$, find all subgraph isomorphisms from $G_T$ to $G_W$.
\end{definition}
If there is at least one subgraph isomorphism, we call the problem \textbf{satisfiable}. 

\begin{figure}
    \centering
    \includegraphics[width=\linewidth]{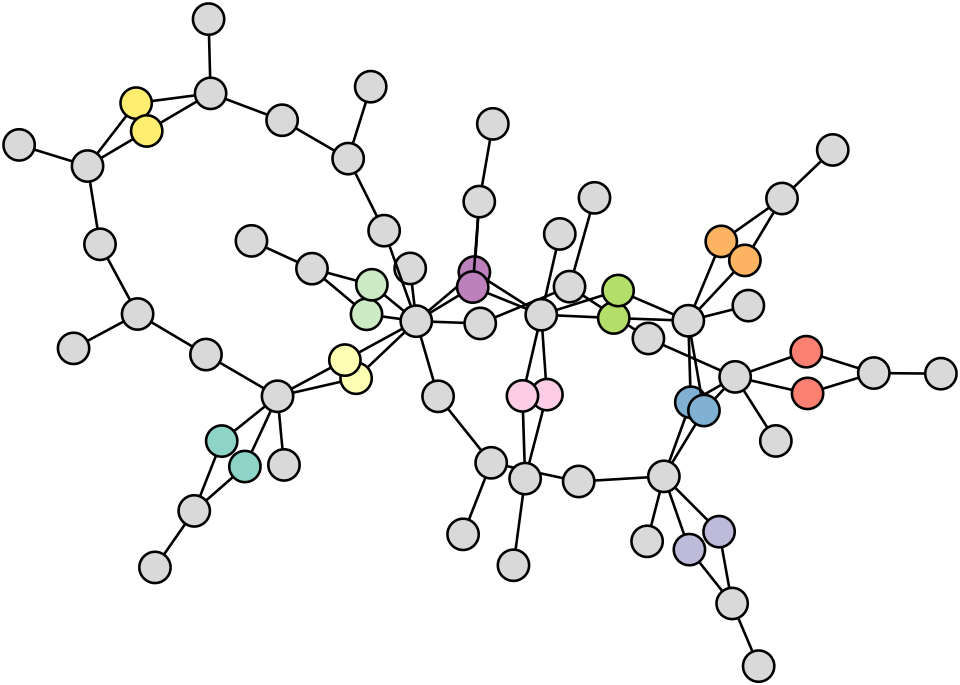}
    \caption{Graph representing a system of biochemical reactions from \cite{gay2014subgraph}. Non-gray nodes of the same color are structurally equivalent.}
    \label{fig:biochemical-uncompressed}
\end{figure}


Simply finding a subgraph isomorphism is 
NP-complete \cite{garey2002computers}, suggesting that there is no algorithm that efficiently finds all subgraph isomorphisms on all graphs. In spite of this, significant progress has been made in the development of algorithms for detecting subgraph isomorphisms \cite{mccreesh2020glasgow,Han2013TurboisoTU,bi2016efficient,houbraken2014index}. As in other NP-complete problems, the literature addressing the enumeration of exact subgraph isomorphisms has focused on performing a full tree search of the solution space. The state-of-the-art algorithms generally focus on iteratively building partial matches and use heuristics for optimizing the variable ordering and pruning branches of the search tree.

We are interested in the full enumeration and characterization of the solution space for the SMP. This is important for real-world applications. Consider a template representing transactions of a crime ring and the world is the broader transaction network. If there are thousands of matches, any given match is likely to be a false positive suggesting the need to further narrow down the search by adding more detail to the template.
Our work identifies and characterizes the structure of such redundancies due to symmetry in the matching problem. \changedOne{We exploit these symmetries to produce a compressed version of the solution space which saves space as well as aids understanding the problem's solutions.}

As an example of symmetry, observe the template graph in \figref{fig:biochemical-uncompressed} which is from a system of biochemical reactions \cite{gay2014subgraph} and note that each pair of colored nodes is interchangeable in any solution as they have the exact same neighbors. 
As there are 11 such pairs in the graph, for any found isomorphism, we can generate $2^{11} = 2048$ more solutions simply by interchanging nodes. 
By avoiding redundant solutions in a subgraph search, we can significantly reduce the search time (potentially by a factor of 2048 or more). 
This simple form of symmetry is known as \textbf{structural equivalence}.

Broader notions of equivalence can be used to further accelerate search. 
In \figref{fig:toy-example}, the yellow and blue nodes are each individually structurally equivalent. 
However, if we proceed by matching A to 1, then we can complete an isomorphism by matching B and C to any of 2, 3, 4, or 5. 
The presence of additional edges incident to 4 and 5 hides that 4 and 5 may be swapped out for 2 or 3. 
By identifying when these additional edges may be ignored, we can again dramatically reduce the amount of work. 
This second notion of equivalence we will refer to as \textbf{candidate equivalence}. In the body of this paper, we will formally define these terms and demonstrate how they can be applied in a tree search algorithm.
These two notions of equivalence can be broadly classified into two categories: {\bf static equivalenc}e, which describes equivalence apparent from the problem description, and {\bf dynamic equivalence}, which describes equivalence uncovered in the search process. Structural equivalence falls into the former category while candidate equivalence belongs to the latter. \changedOne{We note that these forms of equivalence are dependent on the structure of the graphs and cannot be used should we interchange one template graph for another.}

\begin{figure}
    \centering
    \includegraphics[height=100px]{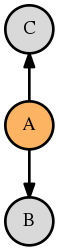} \;\;
    \includegraphics[height=100px]{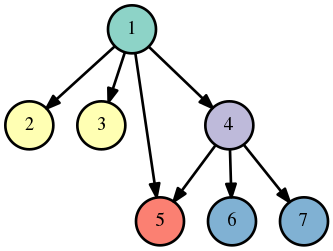}
    \caption{Example subgraph isomorphism problem with template on the left and world on the right. Nodes of the same color are structurally equivalent.}
    \label{fig:toy-example}
\end{figure}

\subsection{Related Work}

The first significant subgraph isomorphism algorithm proposed by Ullmann \cite{ullmann1976algorithm} 
generates candidate vertices from the neighbors of already matched world vertices. The widely used VF2 algorithm \cite{cordella2004sub} improves on this by choosing a match ordering that favors template vertices adjacent to already-matched template vertices and adding pruning rules based on the degrees of vertices. The authors more recently published the VF3 algorithm \cite{carletti2017introducing} that further extends this approach.

In a different approach, Solnon \cite{solnon2010alldifferent} emphasizes the  constraint propagation paradigm from artificial intelligence, and her algorithm LAD internally stores candidate lists for every template vertex. By way of a repeated application of a filter based on the vertex neighborhood structure, her algorithm can effectively prune branches of the tree search. She expands on this work in \cite{kothoff2016} to incorporate more powerful filters. Other solvers including Glasgow \cite{mccreesh2015parallel, mccreesh2020glasgow}, SND \cite{audemard2014scoring}, and ILF \cite{zampelli2010solving} each address various ways to enhance filters based on other graph properties including number of paths, cliques or more complicated neighborhoods in order to strengthen the filters.

Significant work has been done to exploit symmetry to compress graphs and count isomorphisms. TurboIso \cite{Han2013TurboisoTU} exploits basic symmetry in the template graph and optimizes the matching order based on a selection of candidate regions and exploration within those regions. CFL-Match \cite{bi2016efficient} proposes a match ordering based on a decomposition of the template graph into the core, a highly connected subgraph, and a forest which is further decomposed into a forest and leaves. BoostIso \cite{ren2015exploiting} exploits symmetry in the world graph and presents a method by which other tree-search-based approaches are accelerated by using their methodology. The ISMA algorithm \cite{demeyer2013index} exploits basic bilateral and rotational symmetry of the template to boost subgraph search and this work was extended into the ISMAGS algorithm \cite{houbraken2014index} to incorporate general automorphic symmetries of the template graph.

\changedOne{One closely related problem is the inexact subgraph matching problem. This problem replaces the strict edge-preserving constraints of exact isomorphism with a penalty for edge and label mismatches which is to be minimized. The exact form of the penalty varies across applications and there are a myriad of approaches many taking inspiration from the exact matching problem. These involve a variety of techniques including filtering \cite{inexact20,InexactHRL2019}, A* search \cite{jin2019noisy}, indexing \cite{MLIndex}, and continuous techniques \cite{MatchedFilter2019}. We will not be considering this problem in this paper.}

\subsection{Paper Outline}
In this paper, we demonstrate how tree search-oriented approaches can be accelerated by exploiting both static forms of equivalence, apparent at the start of search, as well as dynamic forms of equivalence, which are uncovered as the search proceeds.
In Section \ref{sec:structural-equivalence}, we formally define structural equivalence and how to incorporate it into a subgraph search. In Section \ref{sec:candidate-equivalence}, we introduce candidate equivalence to demonstrate how to expose previously unseen equivalences during a subgraph search. In Section \ref{sec:node-cover-equivalence}, we introduce node cover equivalence, an alternate form of equivalence which is easy to calculate, and unify all the notions of equivalence into a hierarchy. In Section \ref{sec:experiments}, we adapt the Glasgow solver \cite{mccreesh2020glasgow} to incorporate equivalences and apply it to a set of benchmarks to assess the performance of each of the equivalence levels\footnote{Our implementation of our algorithms can be found at the following repository: https://github.com/domyang/glasgow-subgraph-solver.}.
In Section \ref{sec:compact-solution-representation}, we demonstrate how to succinctly represent and visualize large classes of solution by incorporating equivalence.
In Section \ref{sec:multiplex}, we extend our algorithm to be able to handle multiplex \changedOne{multigraphs} and show our algorithm's success in fully mapping out the solution space on a variety of these more structured networks. 

\changedOne{This paper takes inspiration for the general subgraph tree search structure from \cite{moorman2021subgraph} and shares many of the same test cases on multichannel networks.  
In our previous work \cite{nguyen2019applications}, we introduced a simpler notion of candidate equivalence and candidate structure, and tested it on a small selection of multichannel networks. From these prior works, we observed the high combinatorial complexity of the solution spaces necessitating an approach which can exploit symmetry to compress the solution space and accelerate search. In this work, we expand on both papers by introducing several new notions of equivalence, providing a rigorous foundation for their efficacy in subgraph search, establishing a compact representation of the solution space, and empirically assessing these methods on a broad collection of both real and synthetic data sets.}

\section{Structural Equivalence}\label{sec:structural-equivalence}
\textbf{Structural equivalence} is an easily understood property of networks which, if present, can be exploited to greatly speed up subgraph search. Intuitively, two vertices are structurally equivalent to each other if they can be ``swapped'' without changing the graph structure. This type of equivalence often occurs in leaves that are both adjacent to the same vertex.

\begin{definition}\label{def:structural-equivalence}
In a graph $G = (V, E)$, we say that two vertices $v, w$ are \textbf{structurally equivalent} (denoted $v \sim_s w$) if:
\begin{enumerate}
    \item For $u \in V, u \ne v, w$,
    \begin{enumerate}
        \item $(u, v) \in E \Leftrightarrow (u, w) \in E$
        \item $(v, u) \in E \Leftrightarrow (w, u) \in E$
    \end{enumerate}
    \item $(v, w) \in E \Leftrightarrow (w, v) \in E$
\end{enumerate}
\end{definition}
This definition implies that the neighbors of structurally equivalent vertices (not including the vertices themselves) must coincide. The following proposition verifies that this is an equivalence relation.

\begin{prop}\label{prop:structural-equivalence}
$\sim_s$ is an equivalence relation.
\end{prop}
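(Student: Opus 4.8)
The plan is to verify the three defining properties of an equivalence relation directly from Definition~\ref{def:structural-equivalence}: reflexivity, symmetry, and transitivity. Reflexivity ($v \sim_s w$ with $v = w$) is essentially vacuous: conditions 1(a), 1(b) quantify over $u \ne v, w$, and when $v = w$ condition 2 is the trivial biconditional $(v,v) \in E \Leftrightarrow (v,v) \in E$. Symmetry is immediate from the fact that the definition is already stated symmetrically in $v$ and $w$: swapping the roles of $v$ and $w$ in 1(a) and 1(b) interchanges those two conditions, and condition 2 is manifestly symmetric.

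The substantive step is transitivity. Suppose $u \sim_s v$ and $v \sim_s w$; I want to show $u \sim_s w$. The natural approach is to pick an arbitrary vertex $x \notin \{u, v, w\}$ and chase the biconditionals: from $u \sim_s v$ we get $(x,u) \in E \Leftrightarrow (x,v) \in E$, and from $v \sim_s w$ we get $(x,v) \in E \Leftrightarrow (x,w) \in E$, so composing yields $(x,u) \in E \Leftrightarrow (x,w) \in E$, and similarly for the reversed edges; this handles condition~1 for $u \sim_s w$ as long as $x \notin \{u,v,w\}$. The main obstacle is the boundary case where the vertex being tested equals the ``middle'' vertex $v$ --- i.e., establishing condition~1 of $u \sim_s w$ at the vertex $x = v$ (which is excluded from the quantifier in $u \sim_s v$ and $v \sim_s w$ precisely when it coincides with one of the paired vertices), and establishing condition~2, namely $(u,w) \in E \Leftrightarrow (w,u) \in E$.

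To resolve the boundary case I would argue as follows. For condition~1 of $u \sim_s w$ at $x = v$: I need $(v, u) \in E \Leftrightarrow (v, w) \in E$ and $(u, v) \in E \Leftrightarrow (w, v) \in E$. The first of these is exactly condition~1(a) (with the roles arranged appropriately) applied to the pair $v \sim_s w$ evaluated at the vertex $u$ — valid since $u \notin \{v, w\}$ — giving $(u,v)\in E \Leftrightarrow (u,w)\in E$ and its reverse; combined with condition~2 for $v \sim_s w$ and for $u \sim_s v$ to reconcile the orientations, these chain together to the desired biconditional. For condition~2 of $u \sim_s w$: apply condition~1 of $u \sim_s v$ at the vertex $w$ (valid since $w \ne u, v$) to relate $(u,w), (w,u)$ to $(v,w), (w,v)$, then use condition~2 of $v \sim_s w$, which asserts $(v,w) \in E \Leftrightarrow (w,v) \in E$. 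A symmetric argument handles the case $x = u$ when testing from the $w$ side, though by the symmetry already established it suffices to check one of them.

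In writing this up I would keep it brief: state that reflexivity and symmetry are immediate from the form of the definition, then give the transitivity argument as a short case analysis on whether the test vertex lies outside $\{u,v,w\}$ or coincides with $v$, citing condition~2 to handle edge orientations and the degenerate adjacencies. I expect the only place a careless reader might stumble is in remembering that the ``middle'' vertex $v$ is not covered by the universally quantified clauses when it happens to be one of the two vertices named in another instance of the relation — so I would call that out explicitly.
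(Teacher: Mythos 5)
Your proof is correct and takes the standard route the paper itself uses: reflexivity and symmetry are immediate from the symmetric form of Definition~\ref{def:structural-equivalence}, and transitivity follows by chaining the biconditionals for test vertices $x \notin \{u,v,w\}$, with the only delicate points being the boundary test vertex $x = v$ and condition~2 for the pair $(u,w)$, which you correctly resolve by applying condition~1 of each hypothesis at the other endpoint together with the condition~2 statements. One small bookkeeping remark: to close the $x=v$ chain you also need condition~1 of $u \sim_s v$ evaluated at the vertex $w$ (e.g.\ $(w,u)\in E \Leftrightarrow (w,v)\in E$), not just the two condition-2 statements you list there --- you do invoke exactly that fact in the next sentence for condition~2, so the ingredient is present, and note that a separate case $x=u$ never arises since condition~1 of $u \sim_s w$ only quantifies over $x \ne u, w$.
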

\begin{proof}
    All proofs for propositions stated in this paper are provided in the appendix.
\end{proof}

Using this relation, we can partition the vertices of any graph into structural equivalence classes, and interchange members of each class without changing the essential structure of the graph. Checking for equivalence between two vertices simply amounts to comparing neighbors in an $O(|V|)$ operation in the worst case, but is generally faster for sparse graphs. Computing the classes themselves can be found by pairwise comparison of vertices resulting in $O(|V|^3)$ operations in the worst case. Algorithm \ref{alg:eq_class_alg} demonstrates how one could implement a breadth first search algorithm to take advantage of the sparsity of a graph to accelerate the computation. Since  For each vertex $v$ visited, it takes $O(deg(v)^2)$ to partition the neighbors of $v$ into equivalence classes, and so in the worst case the algorithm takes $O(\sum_v deg(v)^2) \approx O(|V|\overline{deg(v)^2})$ where $\overline{deg(v)^2}$ denotes the average over $deg(v)^2$ for all $v$. For sparse graphs, $deg(v) \ll |V|$, and so this will be significantly faster than a naive pairwise check.

\begin{algorithm}
\caption{Routine for computing equivalence classes}
\label{alg:eq_class_alg}
\begin{algorithmic}[1]
\Function{FindEqClasses}{$G = (V, E)$}
    \State Let $Q$ be a queue
    \State Pick first vertex $v$ to put in $Q$
    \State Let $EQ = \{\}$ 
    \State Let $\text{visited} = \{\}$
    
    \While{$Q$ not empty}
        \State Dequeue $v$ from $Q$
        \State Add $v$ to visited
        \State Partition $N(v)$ into equivalence classes, to $EQ$
        \State Add representatives from neighbor classes to $Q$
    \EndWhile

    \State Check if first vertex $v$ is in any class and add if so
    \State Else add it to its own class
    \State \Return $EQ$
    \EndFunction
\end{algorithmic}
\end{algorithm}

\subsection{Interchangeability and Isomorphism Counting}

We now show that given any subgraph isomorphism, we can interchange any two vertices in the template graph and still retain a subgraph isomorphism. Before we do this, we formally define what we mean by interchangeability.

\begin{definition}\label{def:interchangeability}
    Two template graph vertices $v, w \in V_T$ are \textbf{interchangeable} if for \changedOne{all} subgraph isomorphism\changedOne{s} $f: V_T \rightarrow V_W$, the \changedOne{mapping} $g$ given by interchanging $v$ and $w$:
    \[
        g(u) = \begin{cases}
        f(w) & u = v \\
        f(v) & u = w \\
        f(u) & \text{otherwise}
    \end{cases}
    \]
    is also a subgraph isomorphism.
    
    Two world graph vertices $v', w' \in V_W$ are \textbf{interchangeable} if for \changedOne{all} subgraph isomorphism\changedOne{s} $f$, if both $v', w'$ are in the image of $f$ with preimages $v, w$, the \changedOne{mapping} $g$:
    \[
        g(u) = \begin{cases}
            w' & u = v \\
            v' & u = w \\
            f(u) & \text{otherwise}
        \end{cases}
    \]
    is an isomorphism. If only one, say $v'$, is in the image, then $h$ given by
    \[
        h(u) = \begin{cases}
            w' & u = v \\
            f(u) & \text{otherwise} 
        \end{cases}
    \]
    is also an isomorphism.

\end{definition}
We will qualify this definition later in the paper by restricting the interchangeability only to certain subsets of isomorphisms. 
The proposition affirming template vertex interchangeability under template structural equivalence follows:

\begin{prop}\label{prop:template-swappability}
    Given graphs $G_T = (V_T, E_T)$, $G_W = (V_W, E_W)$, if $v, w \in V_T$ are structurally equivalent, then they are interchangeable in any subgraph isomorphism.
\end{prop}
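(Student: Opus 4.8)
The plan is to verify directly that the interchanged map $g$ satisfies the two defining conditions of a subgraph isomorphism: injectivity and the edge-preserving property. Since $g$ is obtained from $f$ by composing with the transposition $\tau = (v\ w)$ on the domain (i.e.\ $g = f \circ \tau$), and $\tau$ is a bijection of $V_T$ while $f$ is injective, injectivity of $g$ is immediate. So the entire content of the proposition is the edge-preserving property, and that is where the structural equivalence hypothesis must be used.

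For the edge-preserving property, I would take an arbitrary edge $(t_1, t_2) \in E_T$ and show $(g(t_1), g(t_2)) \in E_W$, splitting into cases according to whether $t_1, t_2 \in \{v, w\}$. The routine case is when neither $t_1$ nor $t_2$ lies in $\{v,w\}$: then $g(t_1) = f(t_1)$, $g(t_2) = f(t_2)$, and we are done since $f$ is edge-preserving. The case $\{t_1,t_2\} = \{v,w\}$ uses condition (2) of Definition~\ref{def:structural-equivalence} together with the fact that $(f(v),f(w)) \in E_W$ (which holds because $f$ is edge-preserving on the edge $(v,w)$, if present): here $g$ just swaps the two endpoints, so $(g(v),g(w)) = (f(w),f(v))$, and one invokes that $(f(w),f(v)) \in E_W$ as well — which follows from $(f(v),f(w))\in E_W$ combined with whatever symmetry assumption is in force, or more carefully, directly from $(w,v)\in E_T$ being an edge when $(v,w)$ is, via condition (2), and then applying $f$. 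The interesting case is when exactly one of $t_1,t_2$ is in $\{v,w\}$, say $t_1 = v$ and $t_2 = u \notin\{v,w\}$ (the subcase $t_1 = w$ is symmetric, and the cases where the special vertex is in the second coordinate are analogous using part (b) of condition (1)). Then $(v,u)\in E_T$, so by condition (1a)/(1b) of structural equivalence, $(w,u)\in E_T$ as well; since $f$ is edge-preserving, $(f(w), f(u)) \in E_W$; but $g(v) = f(w)$ and $g(u) = f(u)$, so $(g(t_1),g(t_2)) = (f(w),f(u)) \in E_W$, as needed.

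I would state the argument for directed edges and note that the undirected case is the same with both parts of condition (1) collapsing into one. The main thing to be careful about — the only real obstacle, and it is more bookkeeping than mathematics — is handling the directionality consistently: an edge incident to $v$ may have $v$ as its head or its tail, and condition (1) of Definition~\ref{def:structural-equivalence} has separate clauses (a) and (b) precisely for these two situations, so the case analysis must pair each subcase with the correct clause. Once the cases are laid out, each is a one-line application of structural equivalence followed by edge-preservation of $f$. No induction or clever construction is needed; the proposition is essentially the observation that structural equivalence is exactly the condition making $f\circ (v\ w)$ edge-preserving whenever $f$ is.
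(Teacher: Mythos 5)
Your proof is correct and follows the same direct verification the paper gives: injectivity of $g = f\circ(v\ w)$ is immediate, and edge-preservation is checked by the case analysis on how the edge meets $\{v,w\}$, invoking clause (1a)/(1b) of Definition~\ref{def:structural-equivalence} for edges with one endpoint in $\{v,w\}$ and clause (2) for the edge between $v$ and $w$ itself, then applying edge-preservation of $f$. Just make sure the final write-up uses the clean version of the $\{t_1,t_2\}=\{v,w\}$ case (deduce $(w,v)\in E_T$ from condition (2) and apply $f$), not the vaguer ``whatever symmetry assumption is in force'' phrasing.
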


Hence, interchanging the images of two template vertices preserves subgraph isomorphism.
As transpositions generate the full set of permutations, we have the following result:
\begin{prop}\label{prop:template-eq-counting}
If we can partition $V_T = C_1,\ldots,C_n$ into structural equivalence classes, and there exists at least one subgraph isomorphism, then there are at least
\[
    \prod_{i=1}^n |C_i|!
\]
subgraph isomorphisms.
\end{prop}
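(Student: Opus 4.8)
The plan is to fix one subgraph isomorphism $f$ (which exists by hypothesis) and build an injective map from the product of symmetric groups $\prod_{i=1}^n S_{|C_i|}$ into the set of subgraph isomorphisms. Given a tuple of permutations $(\sigma_1,\ldots,\sigma_n)$ with $\sigma_i$ a permutation of the vertices in $C_i$, I would form the map $\sigma = \sigma_1 \times \cdots \times \sigma_n : V_T \to V_T$ acting blockwise, and then consider $g = f \circ \sigma$. The claim is that $g$ is again a subgraph isomorphism. Since $|\prod_{i=1}^n S_{|C_i|}| = \prod_{i=1}^n |C_i|!$ and the assignment $(\sigma_1,\ldots,\sigma_n) \mapsto f\circ\sigma$ is injective (because $f$ is injective, distinct $\sigma$ give distinct composites), this produces the required count.

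The two things to verify are that $g = f\circ\sigma$ is (i) injective and (ii) edge-preserving. Injectivity is immediate: $\sigma$ is a bijection of $V_T$ (a product of permutations on a partition), and $f$ is injective, so the composite is injective. For edge preservation, the natural route is to reduce $\sigma$ to a product of transpositions, each swapping two vertices within a common class $C_i$, i.e.\ two structurally equivalent template vertices. By Proposition~\ref{prop:template-swappability}, interchanging the images of two structurally equivalent template vertices in a subgraph isomorphism yields another subgraph isomorphism; composing $f$ with such a transposition $\tau$ is exactly the operation $g(u)$ described in Definition~\ref{def:interchangeability}. So each single transposition preserves the subgraph-isomorphism property, and since any $\sigma_i$ factors into transpositions within $C_i$, and the $\sigma_i$ act on disjoint blocks, $\sigma$ itself is a composite of such transpositions. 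Applying Proposition~\ref{prop:template-swappability} once per transposition in the factorization, by induction on the number of transpositions, shows $f\circ\sigma$ is a subgraph isomorphism.

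The one genuinely careful point — the step I expect to be the main obstacle — is making sure the inductive argument is applied correctly: Proposition~\ref{prop:template-swappability} is stated for structural equivalence in $G_T$, and structural equivalence classes are fixed subsets of $V_T$, so any transposition of two elements of the \emph{same} class $C_i$ is a transposition of structurally equivalent vertices regardless of how many prior swaps have been performed — crucially, the relation $\sim_s$ is a property of $G_T$ alone, not of the current map. Thus at each inductive step we still have a valid subgraph isomorphism and the next transposition still swaps two $\sim_s$-equivalent vertices, so Proposition~\ref{prop:template-swappability} applies again. Once this is spelled out, the counting conclusion is just the observation that the $\prod_i |C_i|!$ permutations $\sigma$ are distinct and $f$ is injective, hence the maps $f\circ\sigma$ are pairwise distinct subgraph isomorphisms.
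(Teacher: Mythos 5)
Your proposal is correct and follows essentially the same route as the paper: apply Proposition~\ref{prop:template-swappability} to transpositions within each structural equivalence class, use the fact that transpositions generate each symmetric group to realize every blockwise permutation $\sigma$, and count the resulting maps $f\circ\sigma$, which are pairwise distinct by injectivity of $f$. Your explicit treatment of the induction and of why $\sim_s$ is unaffected by prior swaps is a careful spelling-out of the same argument, not a different approach.
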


We can also apply this structural equivalence to the world graph to demonstrate a similar kind of interchangeability.

\begin{prop}\label{prop:world-swappability-1}
    If $v', w' \in V_W$ are structurally equivalent, then in any subgraph isomorphism $f$, they are interchangeable. 
\end{prop}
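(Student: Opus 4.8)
The plan is to reduce both cases in the definition of world‑vertex interchangeability to a single observation: the transposition $\tau = (v'\ w')$ of $V_W$ --- the permutation swapping $v'$ with $w'$ and fixing every other vertex --- is a graph automorphism of $G_W$ whenever $v' \sim_s w'$. This is the world‑side analogue of the fact used to prove Proposition \ref{prop:template-swappability}, and once it is in hand the rest is a formal composition argument.

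First I would establish the automorphism lemma. Since $\tau$ is an involution, it suffices to show $(a,b) \in E_W \Rightarrow (\tau(a), \tau(b)) \in E_W$ for all $a,b \in V_W$, and I would split into cases by which of $a,b$ lie in $\{v', w'\}$. If neither does, $\tau$ fixes both endpoints and the claim is immediate. If exactly one endpoint lies in $\{v',w'\}$, the required equivalence is precisely one of conditions 1(a)--1(b) of Definition \ref{def:structural-equivalence} (applied with the structurally equivalent pair $v',w'$ and $u$ the other endpoint), handling both the ``edge into the pair'' and ``edge out of the pair'' orientations. If $\{a,b\} = \{v',w'\}$, the claim is condition 2. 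Under the standard convention that the graphs are loopless the degenerate case $a = b \in \{v',w'\}$ does not occur, so the case analysis is exhaustive and $\tau \in \mathrm{Aut}(G_W)$.

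Next, for an arbitrary subgraph isomorphism $f : V_T \to V_W$, consider $\tau \circ f$. It is injective because $f$ is injective and $\tau$ is a bijection, and it is edge‑preserving because $(t_1,t_2) \in E_T$ gives $(f(t_1),f(t_2)) \in E_W$, whence $(\tau f(t_1), \tau f(t_2)) \in E_W$ since $\tau$ is an automorphism; hence $\tau \circ f$ is again a subgraph isomorphism. It then only remains to check that $\tau \circ f$ is exactly the map prescribed in Definition \ref{def:interchangeability}. If both $v',w'$ lie in the image, with (unique, by injectivity) preimages $v,w$, then $(\tau\circ f)(v) = \tau(v') = w'$, $(\tau\circ f)(w) = \tau(w') = v'$, and for every other $u$ we have $f(u) \notin \{v',w'\}$ so $(\tau\circ f)(u) = f(u)$; thus $\tau\circ f = g$. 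If only $v'$ lies in the image, with preimage $v$, then $w'$ is not in the image, so $(\tau\circ f)(v) = w'$ and $(\tau\circ f)(u) = f(u)$ for all $u \neq v$ (again because $f(u) \notin \{v',w'\}$ there); thus $\tau\circ f = h$. In each case the prescribed map is a subgraph isomorphism, which is what interchangeability demands.

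The only real care needed is this last bookkeeping step --- verifying that the case‑defined map of Definition \ref{def:interchangeability} genuinely coincides with $\tau\circ f$, which hinges on $f(u) \notin \{v',w'\}$ for every $u$ outside the listed preimages (immediate from injectivity together with the hypotheses on the image) --- plus the loopless convention that makes the automorphism check complete. Everything else is routine.
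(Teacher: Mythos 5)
Your proof is correct and follows essentially the same route as the paper's: the substance is a case check that swapping $v'$ and $w'$ preserves all world edges, which conditions 1(a), 1(b), and 2 of Definition \ref{def:structural-equivalence} supply, followed by verifying that the resulting map coincides with the maps $g$ and $h$ of Definition \ref{def:interchangeability} and is still injective and edge-preserving. Packaging the edge check as the statement that the transposition $\tau = (v'\,w')$ is an automorphism of $G_W$ and writing the new match as $\tau\circ f$ is a tidy but not substantively different presentation, and your bookkeeping (that injectivity forces $f(u)\notin\{v',w'\}$ away from the swapped preimages, plus the loopless convention) covers the only delicate points.
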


If we apply both template and world structural equivalence to our problem, it is natural to ask how many solutions we can now generate from a single solution. This is given by the following proposition:
\begin{prop}\label{prop:tewe-counting}
    Let $f:V_T \rightarrow V_W$ be a subgraph isomorphism.
    Suppose we partition the template graph $V_T = \bigcup_{i=1}^nC_i$ and the world graph $V_W = \bigcup_{j=1}^mD_j$ into structural equivalence classes. Let $C_{i,j} = C_i \cap f^{-1}(D_j)$ represent the set of template vertices in $C_i$ that map to world vertices in the equivalence class $D_j$. Then there are 
    \[
        \prod_{i=1}^n|C_i|!\prod_{j=1}^m\prod_{i=1}^n\binom{|D_j| - \sum_{k=1}^{i-1}|C_{k,j}|}{|C_{i,j}|}
    \]
    isomorphisms generated by interchanging equivalent template vertices or world vertices using Propositions \ref{prop:template-swappability} and \ref{prop:world-swappability-1}.
\end{prop}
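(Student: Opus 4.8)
The plan is to identify the full set $\mathcal{F}$ of subgraph isomorphisms obtainable from $f$ by repeated application of the template swaps of Proposition~\ref{prop:template-swappability} and the world swaps of Proposition~\ref{prop:world-swappability-1}, give a clean combinatorial description of $\mathcal{F}$, and then count it. Write $c_{i,j} = |C_{i,j}|$ and note that $\sum_{j} c_{i,j} = |C_i|$ since $f$ maps every vertex of $C_i$ into some $D_j$. The central claim I would establish is
\[
g \in \mathcal{F} \iff g \text{ is injective and } |g(C_i)\cap D_j| = c_{i,j} \text{ for all } i,j.
\]
Because each move in Propositions~\ref{prop:template-swappability} and~\ref{prop:world-swappability-1} sends a subgraph isomorphism to a subgraph isomorphism, every member of $\mathcal{F}$ is automatically edge-preserving, so once the characterization is in hand it suffices to count the maps satisfying the right-hand condition.

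For the forward inclusion I would show the count vector $\big(|g(C_i)\cap D_j|\big)_{i,j}$ is invariant under a single move. A template swap within a class $C_i$ replaces $g$ by $g\circ\tau$ for a transposition $\tau$ supported on $C_i$, leaving each image set $g(C_k)$ unchanged and hence all counts. A world swap post-composes $g$ with a transposition or a single relabel of two vertices lying in one world class $D_j$ (using Definition~\ref{def:interchangeability}); such a map sends $D_{j'}$ into $D_{j'}$ for every $j'$ and is injective on $g(C_i)$, so it fixes each $|g(C_i)\cap D_{j'}|$. Since $f$ satisfies $|f(C_i)\cap D_j| = c_{i,j}$ by the definition of $C_{i,j}$, the forward inclusion follows by induction on the number of moves. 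For the reverse inclusion (reachability), given a target $g$ with the prescribed counts, I would first pick a class-preserving permutation $\sigma$ of $V_T$ so that $f\circ\sigma$ sends each template vertex into the same world class that $g$ does; this is possible because, for each $i$, the multiset of world classes hit by $f$ on $C_i$ and by $g$ on $C_i$ are both equal to the multiset with multiplicity $c_{i,j}$ on $D_j$. Then, within each world class $D_j$ separately, I would transform $f\circ\sigma$ into $g$ one template vertex at a time, at each step either swapping two occupied vertices of $D_j$ or relabeling a vertex's image onto an unused vertex of $D_j$; an easy inductive argument shows injectivity is maintained throughout and each step is an instance of Proposition~\ref{prop:world-swappability-1}.

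It remains to count the injective maps $g$ with $|g(C_i)\cap D_j| = c_{i,j}$. Such a $g$ is determined by two independent families of choices: (i) for each template class $C_i$, which of its vertices go to each world class, i.e.\ an ordered partition of $C_i$ into blocks of sizes $c_{i,1},\dots,c_{i,m}$, contributing $\prod_i \binom{|C_i|}{c_{i,1},\dots,c_{i,m}} = \prod_i |C_i|!/\prod_j c_{i,j}!$; and (ii) for each world class $D_j$, an injective placement of the $s_j := \sum_i c_{i,j}$ designated template vertices among the $|D_j|$ vertices of $D_j$, contributing $\prod_j |D_j|!/(|D_j|-s_j)!$. Every admissible $g$ arises from exactly one such pair (the blocks are recovered as $C_i\cap g^{-1}(D_j)$ and the placements as the restrictions of $g$), and conversely each pair yields an admissible $g$ (injectivity across classes being automatic since distinct $D_j$ are disjoint), so $|\mathcal{F}|$ is the product of the two counts. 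I would finish by rewriting this product into the stated form via the telescoping identity $\prod_{i=1}^n \binom{|D_j| - \sum_{k<i} c_{k,j}}{c_{i,j}} = |D_j|!/\big((|D_j|-s_j)!\prod_i c_{i,j}!\big)$, which absorbs the factor $\prod_i 1/\prod_j c_{i,j}!$ together with $\prod_j |D_j|!/(|D_j|-s_j)!$ and leaves the prefactor $\prod_i |C_i|!$.

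I expect the main obstacle to be the reachability direction of the characterization: showing that any constraint-respecting relabeling within a world class is realized by a legal sequence of Proposition~\ref{prop:world-swappability-1} moves without ever breaking injectivity at an intermediate stage, and that template swaps first suffice to align which template vertex heads to which world class. A closely related subtlety, worth stating explicitly, is the avoidance of double counting: permuting the images of two template vertices that lie in the same $C_i$ and map into the same $D_j$ is simultaneously a template swap and a world swap, which is why the world-side count is a product of binomials (unordered subsets) rather than falling factorials, the internal orderings having already been counted in $\prod_i |C_i|!$. It is also worth flagging that structural equivalence of two template vertices does not force their $f$-images into a common world class, so the sets $C_{i,j}$ genuinely subdivide each $C_i$.
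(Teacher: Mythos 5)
Your proposal is correct and takes essentially the same route as the paper: the $\prod_i|C_i|!$ factor comes from permuting vertices within template equivalence classes (Proposition~\ref{prop:template-swappability}) and the nested binomials from choosing image vertices within each world class (Proposition~\ref{prop:world-swappability-1}), and your telescoping identity correctly shows your falling-factorial/multinomial bookkeeping agrees with the stated formula. Your extra step of characterizing the generated set by the invariant counts $|g(C_i)\cap D_j|$ together with a reachability argument is a sound, slightly more careful justification that the formula counts exactly the isomorphisms obtainable by these swaps.
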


\subsection{Application to Tree Search}\label{subsec: application to tree search}
We now demonstrate how to adapt any tree-search algorithm to incorporate equivalence. A tree-search algorithm proceeds by constructing a partial matching of template vertices to world vertices, at each step extending the matching by assigning the next template vertex to one of its candidate world vertices. If at any point, the match cannot be extended (due to a contradiction or finding a complete matching), the last assigned template vertex is reassigned to the next candidate vertex. Each possible assignment of template vertex to world vertex corresponds to a node in the tree, and a path from the root of the tree to a leaf corresponds to a full mapping of vertices.

The tree search is a recursive routine described by Algorithm \ref{alg:tree-search}. In this procedure, we maintain a binary $\abs{V(T)} \times \abs{V(W)}$ matrix $cands$ where $cands[i,j]$ is 1 if world vertex $j$ is a candidate for template vertex $i$ and 0 otherwise and a mapping from template vertices to world vertices describing which vertices have already been matched. In lines 2-4, we report a match after having matched all vertices. The call to ApplyFilters in line 5 eliminates candidates based on the assumptions made so far in the partial match. In line 7, we save the current state to return to after backtracking, and in lines 11-14, we iterate through all candidates for the current vertex attempting a match until we have exhausted them all. Then in line 18, we restore the prior state and backtrack.

\begin{algorithm}
\caption{Generic routine for a tree search}\label{alg:tree-search}
    \begin{algorithmic}[1]
    \Function{Solve}{partial\_match, cands}
    \If{MatchComplete(partial\_match)}
        \State{ReportMatch(partial\_match)}
        \State \Return
    \EndIf
    
    \State ApplyFilters(partial\_match, cands)
    \State Let $u$ = GetNextTemplateVertex()
    \State Let cands\_copy = cands.copy()
    \If{Using World Equivalence}
        \State RecomputeEquivalence(partial\_match, cands)
    \EndIf 
    
    \State Let ws = GenerateWorldVertices(cands, eq)
    
    \For{$v$ in ws}
        \State partial\_match.match$(u,v)$
        \State Solve(partial\_match, cands\_copy)
        \State partial\_match.unmatch$(u,v)$
        \If{Using Template Equivalence}
            \For{unmatched $u'\sim u$}
                \State Set cands$[u',v]$ = 0
            \EndFor
        \EndIf
    \EndFor
    \State Let cands = cands\_copy
    \If{Using World Equivalence}
        \State RestoreEquivalence()
    \EndIf  
    
    \State \Return
    \EndFunction
    \end{algorithmic}
\end{algorithm}

Template equivalence can significantly accelerate the tree search; from Proposition \ref{prop:template-swappability} we can swap the assignments of equivalent template vertices to find another isomorphism. If we have a partial match, template vertices $u_1 \sim u_2$, and we have just considered candidate $w$ for $u_1$, we can ignore branches where $u_2$ is mapped to $w$ since we can generate those isomorphisms by taking one where $u_1$ is mapped to $w$ and swapping. Lines 16 and 17 demonstrate how we can incorporate this idea into a tree search (without these, we would have a standard tree search).

To incorporate world equivalence into the search, we modify the search so that we only assign any template vertex to one representative of an equivalence class in the search. This can be done by modifying GenerateWorldVertices to pick only one representative vertex of each equivalence class out of the candidates for the current template vertex. 

Note that after performing the tree search, the solutions found will represent classes of solutions that can be generated by swapping. Some bookkeeping is needed to determine what assignments can be swapped to count the number of distinct solutions.
We call the solutions that are actually found (and are not produced by interchanging vertices) \textbf{representative solutions}. The set of solutions that can be generated by interchanging equivalent vertices for a given representative solution is a \textbf{solution class}. The ability to represent large solution classes with a sparse set of solutions is what allows us to compactly describe massive solution spaces.

\section{Candidate Equivalence}\label{sec:candidate-equivalence}
The equivalence discussed in the prior section is a static form of equivalence, only taking into account information provided at the start of the subgraph search. However, as a subgraph search proceeds, we may be able to discard additional nodes and edges based on information derived from the assignments already made. For example, in \figref{fig:toy-example}, after assigning A to 1, we may discard nodes 6 and 7 and edge (4, 5), as it is impossible for them be included in a match if A and 1 are matched. After these nodes are discarded, we discover that with respect to the matches already made, 2, 3, 4, and 5 are effectively interchangeable.
In order to make use of this dynamic form of equivalence, we need to introduce an auxiliary structure that takes into account our knowledge of the candidates of each vertex $u$ (denoted $C[u]$).
\begin{definition}\label{def:candidate-structure}
    Given template graph $G_T = (V_T, E_T)$, world graph $G_W = (V_W, E_W)$, and candidate sets $C[u] \subset V_W$ for each $u \in V_T$, the \textbf{candidate structure} is the directed graph $G_C = (V_C, E_C)$ where the vertices $V_C = \{(u, c): u \in V_T, c \in V_W\}$ are template vertex-candidate pairs and $((u_1, c_1), (u_2, c_2)) \in E_C$ if and only if $(u_1, u_2) \in E_T$ and $(c_1, c_2) \in E_W$.
\end{definition}

The candidate structure represents both the knowledge of candidates for each template node and how template adjacency interplays with world adjacency. It removes extraneous information to expose equivalences not apparent when looking at the original graphs. \changedOne{We note that this data structure is similar to the compact path index (CPI) introduced in \cite{bi2016efficient}. However, the CPI is only defined for a given rooted spanning tree of the template graph whereas our candidate structure takes into consideration all edges of the template graph.}
For our toy example in \figref{fig:toy-example}, if we assume that the candidate sets are reduced to the minimal candidate sets so that $C[A] = \{1, 4\}$ and $C[B] = \{2,3,4,5,6,7\}$. Then the candidate structure for these graphs is as shown on the left in \figref{fig:example-candidate-structure}.
\begin{figure}
    \centering
    \includegraphics[height=100px]{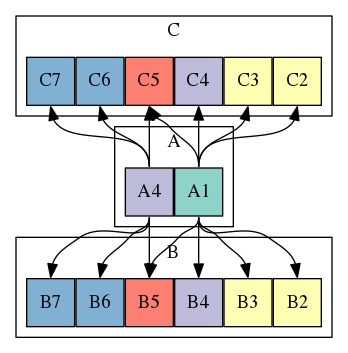}
    \includegraphics[height=100px]{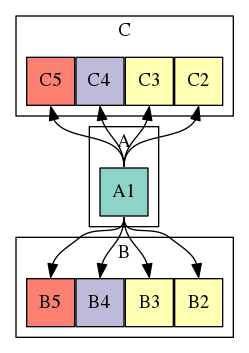}
    \caption{Candidate structure for the graphs in Figure \ref{fig:toy-example} before and after assigning template node A to world node 1.}
    \label{fig:example-candidate-structure}
\end{figure}
At this point, there is no apparent equivalence to exploit from the candidate structure. However once we decide to map node A to 1, the candidate structure reduces to the right graph in \figref{fig:example-candidate-structure}. It is visually clear that nodes 2, 3, 4, and 5 are structurally equivalent as candidates of B and C. Similarly, if we assigned A to 4, nodes 5, 6, and 7 will be structurally equivalent in the candidate structure. We want to determine under which circumstances this will ensure interchangeability.  We introduce the following definition:
\begin{definition}\label{def:candidate-equivalence}
    Given a candidate structure $G_C = (V_C, E_C)$, $c_1, c_2 \in V_W$, we say that $c_1$ is \textbf{candidate equivalent} to $c_2$ with respect to $u \in V_T$, denoted $c_1 \sim_{c,u} c_2$, if and only if $c_1, c_2 \notin C[u]$ or $c_1, c_2 \in C[u]$ and $(c_1, u) \sim_s (c_2, u)$.
\end{definition}
It is easy to show that if the candidate sets are \textit{complete} (for each template node $u$, if there is a matching which maps $u$ to world node $v$, then $v \in C[u]$), then if $c_1 \sim_s c_2$, then $c_1 \sim_{c,u} c_2$ for all template vertices $u$.

The exact criteria for interchangeability is a little more complicated. For example, in \figref{fig:toy-example}, 4 appears as a candidate for both A and for B and C, so that we cannot simply swap 4 with nodes that are  candidate equivalent to 4 with respect to B. 
\changedOne{To address a more complex notion of interchangeability, we introduce some terms. We say that a subgraph isomorphism $f$ is derived from a candidate structure $G_C$ if for any $v \in V_T, (v, f(v)) \in V_C$ (i.e., $f(v)$ is a candidate of $v$). We say that world vertices $w_1, w_2$ are \textbf{$G_C$-interchangeable} if for all isomorphisms derived from the candidate structure $G_C$, $w_1$ and $w_2$ can be interchanged and preserve isomorphism.}

A simple criterion for interchangeability is provided in the following proposition:
\begin{prop}\label{prop:candidate-swappability}
Suppose that given a specific candidate structure $G_C = (V_C, E_C)$, we have that $c_1,c_2 \in C[u]$ and $c_1 \sim_{c,u} c_2$ for some template vertex $u$. Suppose that $c_1$ and $c_2$ are not candidates for any other vertex. Then $c_1$ and $c_2$ are \changedOne{$G_C$-interchangeable}.
\end{prop}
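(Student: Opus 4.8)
The plan is to take an arbitrary subgraph isomorphism $f$ derived from $G_C$ and verify directly that the map $g$ obtained by interchanging $c_1$ and $c_2$ (in the sense of Definition \ref{def:interchangeability}, world-vertex version) is again a subgraph isomorphism derived from $G_C$. There are three cases depending on how many of $c_1, c_2$ lie in the image of $f$: neither, exactly one, or both. The ``neither'' case is vacuous (the swap does nothing), so the real content is the other two.

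First I would handle the case where exactly one, say $c_1 = f(v)$, lies in the image, and form $h$ by redefining $h(v) = c_2$ and $h(w) = f(w)$ for $w \neq v$. Injectivity of $h$ follows because $c_2$ is not in the image of $f$ (here I use the hypothesis that $c_2$ is a candidate for no vertex other than $u$, together with the fact that $v = u$, which holds since $c_1 = f(v) \in C[v]$ forces $v = u$ by the ``not a candidate for any other vertex'' assumption applied to $c_1$). For edge-preservation, take $(v, w) \in E_T$; since $f$ is edge-preserving, $(c_1, f(w)) \in E_W$, and since $(c_1, u) \sim_s (c_2, u)$ with $f(w) \neq c_1, c_2$, Definition \ref{def:structural-equivalence}(1) gives $(c_2, f(w)) \in E_W$; the edge $(v, w)$ with both endpoints $v$ cannot occur since the template is simple, so there is nothing to check there. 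Finally $h$ is derived from $G_C$ because $(v, c_2) \in V_C$ as $c_2 \in C[u] = C[v]$, and all other pairs are unchanged.

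Next, the case where both $c_1 = f(v)$ and $c_2 = f(v')$ are in the image, with $g$ swapping their preimages. By the same argument as above, $v = v' = u$ — but $v \neq v'$ since $f$ is injective and $c_1 \neq c_2$ — contradiction; hence this case never arises under the hypotheses, and we are done. (Alternatively, if one prefers not to invoke the ``only candidate for $u$'' hypothesis so strongly, one checks injectivity of $g$ is automatic since $g$ is $f$ composed with a transposition, and edge-preservation for edges incident to $v$ or $v'$ follows from $(c_1,u)\sim_s(c_2,u)$; I would present whichever is cleaner.)

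The main obstacle I anticipate is bookkeeping around the structural-equivalence clauses: Definition \ref{def:structural-equivalence}(1) only governs neighbors $u' \neq c_1, c_2$, so one must separately account for the possible edge between $c_1$ and $c_2$ themselves using clause (2), and one must be careful that an edge $(v, w) \in E_T$ could have $f(w) \in \{c_1, c_2\}$, which is exactly where the ``not a candidate for any other vertex'' hypothesis does its work — it guarantees the only template vertex mapping into $\{c_1,c_2\}$ is $u$, so no stray template edge forces us to check adjacency between $c_1$ and $c_2$ in a way clause (2) cannot supply. I would make sure to state explicitly where completeness of the candidate sets is or is not needed; here it is not, since we only interchange within $C[u]$ and rely on the purely structural condition $(c_1,u) \sim_s (c_2,u)$.
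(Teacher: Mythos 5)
Your proof is correct and takes essentially the same direct route as the paper's: since $c_1,c_2$ are candidates only for $u$ and any isomorphism derived from $G_C$ sends each template vertex into its candidate set, at most one of $c_1,c_2$ can appear in the image (necessarily as $f(u)$), so the two-in-image case is vacuous and the one-in-image swap is verified by transferring edges via the structural equivalence of the pairs $(u,c_1)$ and $(u,c_2)$. One presentational point: that transfer should be routed explicitly through $E_C$ --- from $(u,w)\in E_T$, $(c_1,f(w))\in E_W$, and $f(w)\in C[w]$ (derivedness of $f$) you get $((u,c_1),(w,f(w)))\in E_C$, hence $((u,c_2),(w,f(w)))\in E_C$ and so $(c_2,f(w))\in E_W$, and symmetrically for edges $(w,u)\in E_T$ --- rather than applying Definition~\ref{def:structural-equivalence} directly to edges of $G_W$.
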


This proposition suggests a simple method for exploiting candidate equivalence. In our tree search, when we generate candidate vertices for a given vertex $u$, we find representatives, for each candidate equivalence class, that do not appear as candidates for other vertices. If a class has a vertex appearing in other candidate sets, then we cannot exploit equivalence and must check each member of the class. Furthermore, as we continue to make matches and eliminate candidates, more world vertices will become equivalent, so it is advantageous to recompute equivalence before every match as is done in line 9 of Algorithm \ref{alg:tree-search}. Upon unmatching, we need to restore the prior equivalence in line 20.

If we have that $f(v) = c_1$ and $f(w) = c_2$, and we want to swap $c_1$ for $c_2$, we need a stronger condition; namely, we need that they are equivalent with respect to both $v$ and $w$. In the process of a tree search, we do not know exactly what each vertex will be mapped to so instead we consider an even stronger condition:
\begin{definition}\label{prop:full-equivalence}
    Given a candidate structure $G_C = (V_C, E_C)$, we say that $c_1 \in V_W$ is \textbf{fully candidate equivalent} to $c_2 \in V_W$, denoted $c_1 \sim_c c_2$ if for all $u \in V_T$, $c_1 \sim_{c,u} c_2$.
\end{definition}
Note that if $c_1 \sim_{c,u} c_2$ for some $u$, and $c_1,c_2$ are not candidates for any other vertices, then $c_1 \sim_c c_2$. This condition enables us to interchange world vertices and still maintain the subgraph isomorphism conditions. This is established by the following proposition:

\begin{prop}\label{prop:full-swappability}
    Suppose that given a specific candidate structure $G_C = (V_C, E_C)$, we have that $c_1,c_2 \in V_W$ and $c_1 \sim_{c} c_2$. Then, $c_1$ and $c_2$ are \changedOne{$G_C$-interchangeable}.
\end{prop}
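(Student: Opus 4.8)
The plan is to reduce the ``full'' interchangeability statement to the simpler case already handled in Proposition~\ref{prop:candidate-swappability}, together with a direct verification of the isomorphism conditions. Let $f:V_T\to V_W$ be any subgraph isomorphism derived from $G_C$, and let $g$ be the map obtained by interchanging $c_1$ and $c_2$ as in Definition~\ref{def:interchangeability} (with the appropriate case depending on whether both, one, or neither of $c_1,c_2$ lies in the image of $f$). I need to check that $g$ is again injective and edge-preserving, and that it is derived from $G_C$. Injectivity is immediate since $g$ is obtained from $f$ by composing with the transposition $(c_1\ c_2)$ on the codomain, which is a bijection; the only subtlety is the one-sided case (say $c_1\in\mathrm{im}(f)$, $c_2\notin\mathrm{im}(f)$), where $g$ just relabels $c_1$ as $c_2$ and injectivity follows because $c_2$ was not already used.

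The heart of the argument is edge preservation. Take $(t_1,t_2)\in E_T$; I must show $(g(t_1),g(t_2))\in E_W$. Split into cases according to whether $t_1$ and/or $t_2$ is a preimage of $c_1$ or $c_2$ under $f$. If neither is, then $g(t_i)=f(t_i)$ and there is nothing to do. If exactly one of them, say $t_1$, has $f(t_1)\in\{c_1,c_2\}$ — say $f(t_1)=c_1$ and $g(t_1)=c_2$ — then since $f$ is edge-preserving and derived from $G_C$ we have $((t_1,c_1),(t_2,f(t_2)))\in E_C$, hence $(t_1,t_2)\in E_T$ and $(c_1,f(t_2))\in E_W$; now I invoke $c_1\sim_c c_2$, specifically $c_1\sim_{c,t_1}c_2$, which by Definition~\ref{def:candidate-equivalence} (using $c_1,c_2\in C[t_1]$, forced because $(t_1,c_1)\in V_C$ means $c_1\in C[t_1]$, and fully candidate equivalent forces $c_2\in C[t_1]$ too) gives $(c_1,t_1)\sim_s(c_2,t_1)$ in $G_C$, so the edge from $(c_2,t_1)$ to $(t_2,f(t_2))$ is also present, yielding $(c_2,f(t_2))\in E_W$, i.e.\ $(g(t_1),g(t_2))\in E_W$. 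The case where both $t_1$ and $t_2$ map into $\{c_1,c_2\}$ needs the structural-equivalence condition~2 of Definition~\ref{def:structural-equivalence} applied to the pair $(c_1,\ast)\sim_s(c_2,\ast)$ in $G_C$ — this is exactly the clause that handles adjacency between $c_1$ and $c_2$ themselves, so that swapping both endpoints still lands on an edge of $E_W$. One also has to handle the directed nature of $E_W$ (both orientations), but conditions 1(a),(b) and 2 of structural equivalence cover each orientation symmetrically.

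Finally I must confirm $g$ is derived from $G_C$, i.e.\ $(t,g(t))\in V_C$ for every $t$; this holds because $V_C$ contains \emph{all} pairs $(u,c)$ with $u\in V_T$, $c\in V_W$ — wait, more precisely I should say $g(t)\in C[t]$ for each $t$: for $t$ not a preimage of $c_1,c_2$ this is inherited from $f$, and for the switched vertices it follows since full candidate equivalence gives $c_1,c_2\in C[t]$ simultaneously whenever one of them is. The main obstacle, as usual in these swap arguments, is the bookkeeping in the case analysis for edge preservation — making sure every combination of (which endpoint is switched) $\times$ (which of $c_1,c_2$ it came from) $\times$ (orientation of the edge) is covered, and that the relevant instance of $\sim_{c,u}$ or structural equivalence in $G_C$ applies. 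The key simplification is that ``fully'' candidate equivalent means $c_1\sim_{c,u}c_2$ holds for the \emph{particular} $u$ that arises in each case, so we never need to know in advance which template vertices $c_1,c_2$ are the images of; this is precisely why the weaker Proposition~\ref{prop:candidate-swappability} is not enough when $c_1,c_2$ may be candidates for several vertices at once.
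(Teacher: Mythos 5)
Your overall architecture is the natural one and most of it is sound: fix an isomorphism $f$ derived from $G_C$, split according to whether both, one, or neither of $c_1,c_2$ lies in the image, check injectivity, and transfer edges using the structural equivalence of the pairs $(t,c_1)\sim_s(t,c_2)$ inside $G_C$; your single-swap case and the observation that full equivalence forces $c_2\in C[t]$ whenever $c_1\in C[t]$ are correct. The genuine gap is the case you dispatch with ``condition 2'': when $f(t_1)=c_1$, $f(t_2)=c_2$ and $(t_1,t_2)\in E_T$ (so $(c_1,c_2)\in E_W$ and you must produce $(c_2,c_1)\in E_W$ for the swapped map), condition 2 of Definition~\ref{def:structural-equivalence} applied to a pair $(t,c_1)\sim_s(t,c_2)$ in $G_C$ only governs edges between those two vertices themselves, and $((t,c_1),(t,c_2))\in E_C$ requires $(t,t)\in E_T$; absent template self-loops it is vacuous and says nothing about adjacency between $c_1$ and $c_2$ in the world. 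So the step as written fails precisely in the situation that distinguishes full candidate equivalence (Proposition~\ref{prop:full-swappability}) from the single-vertex criterion of Proposition~\ref{prop:candidate-swappability}.

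The case is still true, but it needs a two-step chain through an intermediate vertex of $V_C$, using condition 1 twice rather than condition 2 once. Since $c_1\sim_{c,t_2}c_2$ and $c_2=f(t_2)\in C[t_2]$, the vertex $(t_2,c_1)$ lies in $V_C$ and $(t_2,c_1)\sim_s(t_2,c_2)$; condition 1(a) with $u=(t_1,c_1)$ (distinct from both because $t_1\ne t_2$) turns the known edge $((t_1,c_1),(t_2,c_2))\in E_C$ into $((t_1,c_1),(t_2,c_1))\in E_C$. Then $(t_1,c_1)\sim_s(t_1,c_2)$ and condition 1(b) with $u=(t_2,c_1)$ give $((t_1,c_2),(t_2,c_1))\in E_C$, hence $(c_2,c_1)\in E_W$ as required; the symmetric chain handles template edges oriented as $(t_2,t_1)$. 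This is exactly the point where equivalence with respect to both $t_1$ and $t_2$ simultaneously---the ``fully'' hypothesis---is consumed, so it should be spelled out rather than attributed to the $c_1$--$c_2$ clause of structural equivalence. With that repair, the rest of your argument (injectivity, the one-sided case, both edge orientations, and the check that the swapped map remains derived from $G_C$) goes through.
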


\section{Node Cover Equivalence}\label{sec:node-cover-equivalence}
An alternate notion of equivalence, introduced in \cite{moorman2021subgraph}, involves the use of a node cover. 
A \textbf{node cover} is a subset of nodes whose removal, along with incident edges, results in a completely disconnected graph. 
The approach in \cite{moorman2021subgraph} is to build up a partial match of all the vertices in the node cover followed by assigning all the nodes outside the node cover. 
After reducing the candidate sets of all the nodes outside the cover to those that have enough connections to the nodes in the cover, what remains is to ensure that they are all different.

We formalize this with some definitions. A \textbf{partial match} is a subgraph isomorphism from a subgraph of the template graph to the world graph. We list out the mapping as \changedOne{as a list of ordered pairs} $M = \{(v_1, w_1),\ldots,(v_n, w_n)\}$. A template vertex - candidate pair $(v, c)$ is \textbf{joinable} to a partial match $M$ if for each $(v_i, w_i) \in M$, if $(v_i, v) \in V_T$, then $(w_i, c) \in V_W$ and if $(v, v_i) \in V_T$, then $(c, w_i) \in E_T$. \changedOne{If two world vertices $w_1, w_2$ are interchangeable in any subgraph isomorphism extending a partial match $M$, we say that $w_1$ and $w_2$ are \textbf{$M$-interchangeable}}.

Since the problem is significantly simpler, it is easier to obtain a form of equivalence on the vertices.
\begin{definition}\label{def:node-cover-equivalence}
Let $M$ be a partial match $M$ on a node cover $N$ of $V_T$ and suppose that for all $u \in V_T \setminus N$, the candidate set $C[u]$ is comprised entirely of all world vertices joinable to $M$. Two world vertices $w_1, w_2$ are \textbf{node cover equivalent} with respect to $M$, denoted $w_1 \sim_{N,M} w_2$, if for all $u \in V_T \setminus N$, $w_1 \in C[u]$ if and only if $w_2 \in C[u]$.
\end{definition}

For example, consider the template and world in \figref{fig:node-cover-eq}. Once nodes B and D in the node cover are mapped to 2 and 5, the remaining nodes have candidates that have the associated color in the world graph. We then simply group each of these candidates together into equivalence classes. Note that the edges depicted in red are what prevent structural equivalence, and the node cover approach effectively ignores these edges to expose the equivalence of these vertices.

\begin{figure}
    \centering
    \includegraphics[height=100px]{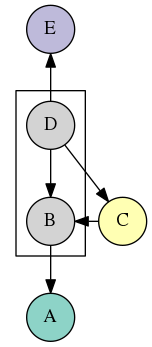}
    \includegraphics[height=100px]{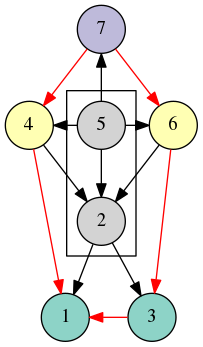}
    \includegraphics[height=100px]{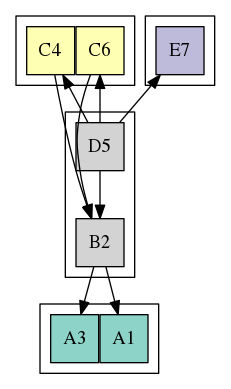}
    \caption{In order from left to right: template, world, and possible candidate structure. The boxed vertices comprise a node cover of the template and the image of the node cover in the world. Nodes of the same color in the world are node cover equivalent. The red edges are extraneous edges which once removed, expose equivalence.}
    \label{fig:node-cover-eq}
\end{figure}

\begin{prop}\label{prop:node-cover-swappability}
Suppose we have a node cover of the template graph $N$, and a partial matching $M$ on $N$ and two world vertices $w_1, w_2$ not already matched satisfy $w_1 \sim_{N,M} w_2$. \changedOne{Then $w_1$ and $w_2$ are $M$-interchangeable}.
\end{prop}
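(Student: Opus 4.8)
\textbf{Proof proposal for Proposition \ref{prop:node-cover-swappability}.}

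The plan is to take an arbitrary subgraph isomorphism $f$ extending the partial match $M$, assume $w_1, w_2$ are both in the image of $f$ (the one-sided case being an easier variant handled the same way), and show that the map $g$ obtained by swapping the preimages of $w_1$ and $w_2$ is again a subgraph isomorphism. Write $v_1 = f^{-1}(w_1)$ and $v_2 = f^{-1}(w_2)$. The first observation is that since $w_1, w_2$ are not matched by $M$ and $N$ is a node cover, both $v_1$ and $v_2$ must lie in $V_T \setminus N$; in particular neither is adjacent to the other (removing $N$ disconnects the graph, so any edge of $G_T$ has at least one endpoint in $N$, and an edge between $v_1$ and $v_2$ would have neither). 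So when we swap, we only need to check edges of the form $(v_1, x)$ or $(v_2, x)$ where $x \in N$, i.e. where $x$ is matched by $M$.

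The key step is then the following: suppose $(v_1, x) \in E_T$ with $x$ matched by $M$ to $w_x = f(x)$. Since $f$ is edge-preserving, $(w_1, w_x) \in E_W$, which (together with $x$ being in $M$ and $v_1 \in V_T \setminus N$ adjacent to $x$) says precisely that $w_1$ is joinable to $M$ along the constraint coming from $v_1$; hence by the hypothesis that $C[v_1]$ consists of \emph{all} world vertices joinable to $M$, we get $w_1 \in C[v_1]$. Applying $w_1 \sim_{N,M} w_2$ gives $w_2 \in C[v_1]$, and since $C[v_1]$ is exactly the joinable set, $w_2$ is joinable to $M$ via the edge $(v_1, x)$, i.e. $(w_2, w_x) \in E_W$ — which is exactly the edge-preservation condition needed for $g$ at $(v_1, x)$. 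The same argument with directions reversed handles edges $(x, v_1)$, and symmetrically for $v_2$ in place of $v_1$. Edges of $G_T$ not incident to $v_1$ or $v_2$ are unaffected by the swap, and injectivity of $g$ is immediate since we have merely transposed two values of the injective map $f$. This establishes that $g$ is a subgraph isomorphism; since $g$ agrees with $f$ (hence with $M$) on $N$, it extends $M$, and therefore $w_1, w_2$ are $M$-interchangeable.

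I expect the main subtlety — not so much an obstacle as a point requiring care — to be the bookkeeping around the definition of ``joinable'': one has to unwind that $C[u]$ being ``comprised entirely of all world vertices joinable to $M$'' is a biconditional (every joinable vertex is in $C[u]$, and every element of $C[u]$ is joinable), and use both directions, one to pass from ``$f$ edge-preserving'' to ``$w_1 \in C[v_1]$'' and the other to pass from ``$w_2 \in C[v_1]$'' back to ``$(w_2, w_x) \in E_W$''. A second point to be careful about is the one-sided case where only $w_1$ is in the image of $f$: here one defines $h$ by sending $v_1 \mapsto w_2$ and leaving everything else fixed, and the same joinability argument shows $h$ is edge-preserving; injectivity holds because $w_2$ was not in the image of $f$ and we removed $w_1$ from the image. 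Finally, one should note explicitly that directed edges are being treated throughout (the definitions use ordered pairs), so both the ``$(v_i, v) \in E_T$'' and ``$(v, v_i) \in E_T$'' clauses of joinability must be invoked.
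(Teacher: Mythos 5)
Your proof is correct and takes essentially the same route as the paper's: since $N$ is a node cover and $w_1, w_2$ are unmatched, their preimages lie outside $N$, so every template edge affected by the swap joins a cover vertex, and both directions of the characterization of $C[u]$ as exactly the vertices joinable to $M$, combined with $w_1 \sim_{N,M} w_2$, transfer edge-preservation from $f$ to the swapped map (with injectivity and the one-sided case handled as you describe). One small phrasing nit: to conclude $w_1 \in C[v_1]$ you need $(v_1, w_1)$ to be joinable with respect to \emph{every} pair of $M$, not just the single pair $(x, w_x)$ you cite, but this follows immediately because $f$ extends $M$ and is edge-preserving, so nothing is actually missing.
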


Node cover equivalence is easy to check and captures a significant portion of the equivalence posed by other methods. 
This is often due to interchangeable nodes being composed of sibling leaves which are generally outside of a node cover.

We note that the methods discussed in the paper (with the exception of basic structural equivalence for template and world as in Proposition \ref{prop:tewe-counting}) cannot incorporate both template and world equivalence.
The combination of allowing template and world node interchanges and having dynamic world equivalence classes significantly complicates the counting process.
One approach which can facilitate the use of both forms of equivalence involves a tree search where entire template equivalence classes are assigned at once instead of individual template nodes.
This kind of approach for assigning the remaining nodes outside of the node cover is discussed in the Appendix Section D.

\subsection{Equivalence Hierarchy}

There is a relation between node cover equivalence and full candidate equivalence, given in the following proposition:
\begin{prop}\label{prop:node-cover-full-cand-equiv}
    Suppose that $N$ is a node cover of $V_T$, $M$ is a partial match on $N$, candidate sets are reduced to joinable vertices, and $w_1, w_2 \in V_T\setminus N$. Then $w_1 \sim_{N,M} w_2 \Leftrightarrow w_1 \sim_c w_2$. 
\end{prop}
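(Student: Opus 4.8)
The plan is to prove the biconditional by unwinding both sides to a common combinatorial statement about which template vertices outside $N$ have $w_1$ (resp. $w_2$) as a candidate. Let me first record the setup carefully. Since $N$ is a node cover, $V_T \setminus N$ is an independent set in $G_T$, and $w_1, w_2 \in V_T \setminus N$ are themselves template vertices (note the proposition is slightly abusing notation — the $w_i$ range over $V_T \setminus N$, but the candidate-equivalence relation $\sim_c$ is phrased for world vertices; I will treat the statement as written, with $w_1, w_2$ playing the role of world vertices whose candidate memberships we compare). The hypothesis that candidate sets are reduced to joinable vertices means: for $u \in V_T \setminus N$, $w \in C[u]$ if and only if $(u,w)$ is joinable to $M$, i.e. $w$ is adjacent (in the appropriate direction) in $G_W$ to every $M$-image of an $N$-neighbor of $u$.

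First I would handle the $(\Leftarrow)$ direction, which should be the easy one. Suppose $w_1 \sim_c w_2$, i.e. $w_1 \sim_{c,u} w_2$ for all $u \in V_T$. Specializing to $u \in V_T \setminus N$: by Definition~\ref{def:candidate-equivalence}, either $w_1, w_2 \notin C[u]$, or $w_1, w_2 \in C[u]$. In either case $w_1 \in C[u] \Leftrightarrow w_2 \in C[u]$, which is exactly the defining condition of $w_1 \sim_{N,M} w_2$ from Definition~\ref{def:node-cover-equivalence}. So this direction is essentially immediate — $\sim_c$ quantifies over \emph{all} template vertices, hence in particular over those outside the cover.

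The substantive direction is $(\Rightarrow)$. Assume $w_1 \sim_{N,M} w_2$; I must show $w_1 \sim_{c,u} w_2$ for \emph{every} $u \in V_T$, not just those outside $N$. Split into cases on where $u$ lies. If $u \in V_T \setminus N$: the hypothesis $w_1 \sim_{N,M} w_2$ gives $w_1 \in C[u] \Leftrightarrow w_2 \in C[u]$; when both are outside $C[u]$ we are done by definition, and when both are inside I need the extra condition $(w_1, u) \sim_s (w_2, u)$ in the candidate structure $G_C$ — this is where I'd use that $u \in V_T\setminus N$ has all its $G_T$-neighbors inside $N$ (independence of $V_T \setminus N$), so the $G_C$-neighbors of $(u, w_i)$ are exactly pairs $(v, c)$ with $v \in N \cap N_{G_T}(u)$ and $c \in N_{G_W}(w_i) \cap C[v]$; since $v$ is matched by $M$ its only candidate of interest is $w_v := M(v)$, and joinability of both $w_1$ and $w_2$ forces $w_v$ adjacent to both, so the neighbor sets coincide. (I should double-check the directed-edge bookkeeping here, but it is routine.) The remaining case is $u \in N$: here $u$ is already matched, so $C[u]$ is effectively the singleton $\{M(u)\}$, and one checks that $w_1 \in C[u] \Leftrightarrow w_2 \in C[u]$ still holds — indeed if $w_1 = M(u)$ then $w_1$ is a matched vertex, and I'd want to argue $w_2$ matched to the same role is impossible unless $w_1 = w_2$, or more cleanly, invoke that the $w_i$ are assumed unmatched (as in Proposition~\ref{prop:node-cover-swappability}'s hypothesis) so neither lies in $C[u]$ for $u \in N$, making $w_1 \sim_{c,u} w_2$ vacuous.

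The main obstacle I anticipate is precisely this last case-analysis over $u \in N$ together with pinning down exactly what "$C[u]$" means for already-matched $u$ — the paper is a little informal about whether matched template vertices retain singleton candidate sets — and making sure the structural-equivalence-in-$G_C$ condition for $u \in V_T \setminus N$ really does follow from joinability alone rather than needing the candidate sets of the \emph{cover} vertices to also be reduced. I'd resolve this by explicitly noting that for $v \in N$, only $M(v)$ matters to the neighborhoods of $(u, w_1)$ and $(u, w_2)$ in $G_C$ regardless of what else sits in $C[v]$, since any common neighbor argument only needs a single witness. Everything else is straightforward unwinding of the two definitions.
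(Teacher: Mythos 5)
Your overall route matches the paper's: the backward direction is the immediate specialization of $\sim_c$ to $u \in V_T\setminus N$, and the forward direction is a case split on $u\in N$ versus $u\notin N$, using that $V_T\setminus N$ is an independent set (so every template neighbor of such a $u$ lies in $N$ and is matched) together with joinability to show that $(u,w_1)$ and $(u,w_2)$ have identical neighborhoods in $G_C$, namely the pairs $(v,M(v))$ over the template in-/out-neighbors $v$ of $u$. Your reading of the statement's ``$w_1,w_2\in V_T\setminus N$'' as a typo for unmatched world vertices is also the right one.

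The step that does not hold up is your closing resolution of the very subtlety you flagged: the claim that for $v\in N$ ``only $M(v)$ matters to the neighborhoods of $(u,w_1)$ and $(u,w_2)$ regardless of what else sits in $C[v]$, since any common neighbor argument only needs a single witness.'' Candidate equivalence (Definition~\ref{def:candidate-equivalence}) requires structural equivalence of $(u,w_1)$ and $(u,w_2)$ inside $G_C$, i.e.\ exact equality of their neighborhoods, not a shared witness. If a matched $v\in N$ retained a stray candidate $c\neq M(v)$ that is a world-neighbor of $w_1$ but not of $w_2$, then $(v,c)$ would be a $G_C$-neighbor of $(u,w_1)$ only, so $(u,w_1)\not\sim_s(u,w_2)$ and the forward implication fails even though $w_1\sim_{N,M}w_2$: a one-edge template $v$--$u_1$ with $N=\{v\}$, $M(v)=x$, and world edges $x$--$w_1$, $x$--$w_2$, $y$--$w_1$ already does this if $y\in C[v]$. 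Similarly, a stray $w_1\in C[v]$ with $w_2\notin C[v]$ would destroy $w_1\sim_{c,v}w_2$ in your $u\in N$ case. The correct repair is not the single-witness claim but the convention your main-line argument (and the paper's tree-search bookkeeping) already presupposes: once $v\in N$ is matched, $C[v]=\{M(v)\}$. With that stated as part of the hypotheses, your Case~2 computation is exactly right --- joinability makes the in- and out-neighborhoods of $(u,w_i)$ precisely $\{(v,M(v))\}$ over the template in-/out-neighbors $v$ of $u$, independent of $i$ --- and the $u\in N$ case is vacuous because $w_1,w_2$ are unmatched. So: same approach as the paper, but replace the ``regardless of what else sits in $C[v]$'' argument with the explicit singleton-candidate convention.
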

    

Thus, until we have assigned a node cover, we can use candidate equivalence to prevent redundant branching; once we have matched all nodes in the node cover, we can check for node cover equivalence: a simpler condition.

The agreement of node cover equivalence and fully candidate equivalence is apparent in the candidate structure presented on the right of \figref{fig:node-cover-eq}. From the candidate structure, the yellow nodes and the green nodes are fully candidate equivalent as they have the same neighbors, and that they are node cover equivalent as they only appear as candidates for the corresponding yellow and green nodes in the template.

The various notions of equivalence form a hierarchy. Structural equivalence of world nodes has the strictest requirements and implies all other forms of equivalence. 
Proposition \ref{prop:node-cover-full-cand-equiv} asserts that under mild conditions, full candidate equivalence and node cover equivalence are one and the same. We can also include candidate equivalence with respect to a template vertex as a weaker condition implied by full candidate equivalence that does not guarantee interchangeability.
The following proposition summarizes these findings:
\begin{prop}\label{prop:equiv-hierarchy}
    Suppose the assumptions of Proposition \ref{prop:node-cover-full-cand-equiv} hold. Given template vertex $t$, world vertices $w_1, w_2$, we have $w_1 \sim_s w_2 \Rightarrow w_1 \sim_{N,M} w_2  \Leftrightarrow w_1 \sim_{c} w_2 \Rightarrow w_1 \sim_{c,t} w_2$. Under the first three equivalences, $w_1$ and $w_2$ are interchangeable.
\end{prop}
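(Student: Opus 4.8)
The plan is to assemble Proposition \ref{prop:equiv-hierarchy} from the pieces already established, treating it as a summary proposition whose content is a chain of implications plus an interchangeability claim. I would first dispose of the middle biconditional $w_1 \sim_{N,M} w_2 \Leftrightarrow w_1 \sim_c w_2$: this is exactly Proposition \ref{prop:node-cover-full-cand-equiv}, whose hypotheses we have assumed, so nothing new is required. The rightmost implication $w_1 \sim_c w_2 \Rightarrow w_1 \sim_{c,t} w_2$ is immediate from Definition \ref{prop:full-equivalence}: full candidate equivalence is defined as candidate equivalence with respect to \emph{every} template vertex, so in particular with respect to the given $t$. That leaves the leftmost implication $w_1 \sim_s w_2 \Rightarrow w_1 \sim_{N,M} w_2$ as the only implication with genuine content.

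For $w_1 \sim_s w_2 \Rightarrow w_1 \sim_{N,M} w_2$, I would argue directly from the definitions. Assume $w_1, w_2 \in V_W$ are structurally equivalent and that (as assumed) the candidate sets of vertices outside the node cover have been reduced to exactly the world vertices joinable to $M$. To show $w_1 \sim_{N,M} w_2$ we must show that for every $u \in V_T \setminus N$, $w_1 \in C[u] \Leftrightarrow w_2 \in C[u]$; equivalently, $w_1$ is joinable to $M$ at $u$ iff $w_2$ is. Joinability of a candidate $c$ at $u$ is a condition only on the adjacencies between $c$ and the already-matched world vertices $w_i = M(v_i)$ for those $v_i$ adjacent to $u$ in $G_T$. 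Since $w_1 \sim_s w_2$, we have $(w_i, w_1) \in E_W \Leftrightarrow (w_i, w_1) \in E_W$ wait --- more precisely $(w_i, w_1) \in E_W \Leftrightarrow (w_i, w_2) \in E_W$ and likewise in the other direction, for every $w_i \neq w_1, w_2$; and since $w_1, w_2$ are not already matched, each $w_i$ is distinct from both, so the edge conditions defining joinability transfer verbatim from $w_1$ to $w_2$. Hence $w_1$ is joinable at $u$ iff $w_2$ is, giving $w_1 \sim_{N,M} w_2$. (One should note the edge case where $u$ itself is adjacent to no matched cover vertex, in which case the joinability condition is vacuous and both candidates trivially satisfy it.)

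Finally, the interchangeability assertion ``under the first three equivalences, $w_1$ and $w_2$ are interchangeable'' follows by collecting the relevant earlier propositions rather than by a fresh argument: $w_1 \sim_s w_2$ gives $G_C$-interchangeability by Proposition \ref{prop:world-swappability-1} (structural equivalence of world vertices), $w_1 \sim_c w_2$ gives $G_C$-interchangeability by Proposition \ref{prop:full-swappability}, and $w_1 \sim_{N,M} w_2$ gives $M$-interchangeability by Proposition \ref{prop:node-cover-swappability}; since these three are equivalent under our hypotheses, all three notions of interchangeability hold simultaneously. The rightmost relation $\sim_{c,t}$ is explicitly \emph{not} claimed to yield interchangeability, consistent with the discussion following Proposition \ref{prop:candidate-swappability}. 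I expect the main (though still modest) obstacle to be stating the $w_1 \sim_s w_2 \Rightarrow w_1 \sim_{N,M} w_2$ step cleanly --- in particular being careful that structural equivalence in $G_W$ only constrains adjacencies to \emph{third} vertices, so we genuinely need $w_1, w_2$ to be unmatched (hence distinct from every $w_i$) for the transfer of joinability conditions to go through.
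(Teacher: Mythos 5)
Your proposal is correct and matches the intended argument: the middle biconditional is just Proposition \ref{prop:node-cover-full-cand-equiv}, the rightmost implication is immediate from Definition \ref{prop:full-equivalence}, the leftmost implication follows from the fact that structural equivalence of unmatched world vertices transfers joinability to $M$ (so the reduced candidate sets contain one iff they contain the other), and interchangeability is collected from Propositions \ref{prop:world-swappability-1}, \ref{prop:node-cover-swappability}, and \ref{prop:full-swappability}. Your care about $w_1,w_2$ being unmatched (so structural equivalence, which only constrains adjacency to third vertices, applies to every matched $w_i$) is exactly the point that needs attention, and you handled it.
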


From this proposition, we observe that full candidate equivalence and node cover equivalence provide the most compact solution space as they require the weakest conditions while still guaranteeing interchangeability.
However, the cost in determining full candidate equivalence may be prove excessive compared to simpler types of equivalence. We address these trade-offs on real and simulated data in Section \ref{sec:experiments}.

\section{Experiments}\label{sec:experiments}

To demonstrate the utility of equivalence for the SMP, we adapt a state-of-the-art tree search subgraph isomorphism solver, Glasgow \cite{mccreesh2020glasgow}, using the modifications described in Algorithm \ref{alg:tree-search}. We consider seven  levels of equivalence: no equivalence (NE) (default), template structural equivalence (TE), world structural equivalence (WE), template and world structural equivalence (TEWE), candidate equivalence as in Proposition \ref{prop:candidate-swappability} (CE), full candidate equivalence as in Proposition \ref{prop:full-swappability} (FE), and node cover equivalence (NC). Each equivalence mode is integrated into the Glasgow solver separately.

For each test class involving template equivalence (TE, TEWE), we compute the template structural equivalence classes, and for each involving world equivalence (WE, TEWE, CE, FE, NC), we compute world structural equivalence classes at the start using Algorithm \ref{alg:eq_class_alg}. Then we make the modifications for template and world equivalence as in Algorithm \ref{alg:tree-search}. For algorithms requiring recomputation of the equivalence classes at each node of the tree search, for speed purposes, we only recompute equivalence for nodes that appear as candidates for the current template node under consideration. We check equivalence between each pair of candidates using the definitions directly.

\begin{table*}
\centering
\caption{Benchmark Dataset Statistics}
\label{tab:benchmark-statistics}
\begin{tabular}{@{}l|l|llllll|llllll@{}}
\toprule
    &    & \multicolumn{6}{c}{Template}                                                              & \multicolumn{6}{c}{World}                                                                 \\ 
    Dataset        & \# Instances & \multicolumn{2}{c}{\# Nodes} & \multicolumn{2}{c}{\# Edges} & \multicolumn{2}{c}{Density} & \multicolumn{2}{c}{\# Nodes} & \multicolumn{2}{c}{\# Edges} & \multicolumn{2}{c}{Density}  \\ 
 &  & Min & Max & Min & Max & Min & Max & Min & Max & Min & Max & Min & Max \\
            \midrule
SF          & 100          & 180          & 900           & 478         & 5978           & 0.006        & 0.165        & 200           & 1000         & 592          & 7148          & 0.006        & 0.159        \\
LV          & 6105         & 10           & 6671          & 10          & 209000         & 0.001        & 1.000        & 10            & 6671         & 10           & 209000        & 0.001        & 1.000        \\
SI          & 1170         & 40           & 777           & 41          & 12410          & 0.005        & 0.209        & 200           & 1296         & 299          & 34210         & 0.004        & 0.191        \\
images-cv   & 6278         & 15           & 151           & 20          & 215            & 0.019        & 0.190        & 1072          & 5972         & 1540         & 8891          & 4.89e-4        & 0.003        \\
meshes-cv   & 3018         & 40           & 199           & 114         & 539            & 0.022        & 0.146        & 201           & 5873         & 252          & 15292         & 4.40e-4        & 0.022        \\
images-pr   & 24           & 4            & 170           & 4           & 241            & 0.017        & 0.667        & 4838          & 4838         & 7067         & 7067          & 0.001        & 0.001        \\
biochemical & 9180         & 9            & 386           & 8           & 886            & 0.012        & 0.423        & 9             & 386          & 8            & 886           & 0.012        & 0.423        \\
phase       & 200          & 30           & 30            & 128         & 387            & 0.294        & 0.890        & 150           & 150          & 4132         & 8740          & 0.370        & 0.782        \\ 
www & 3850 & 5 & 15 & 5 & 45 & 0.071 & 0.750 & 325729 & 325729 & 1497135 & 1497135 & 1.41e-5& 1.41e-5 \\ \bottomrule
\end{tabular}

\end{table*}

We consider single-channel networks from the benchmark suite in \cite{solnon2019experimental}.  Basic parameters for the datasets are listed in Table \ref{tab:benchmark-statistics}. \textit{SF} is composed of 100 instances that are randomly generated using a power law and are designed to be scale-free networks. \textit{LV} is a diverse collection of randomly generated graphs satisfying various properties (connected, biconnected, triconnected, bipartite, planar, etc.). \textit{SI} is a collection of randomly generated instances falling into four categories: bounded valence, modified bounded valence, 4D meshes, and Erdős–Rényi graphs. The \textit{images-cv}, \textit{meshes-cv}, and \textit{images-pr} data \cite{damiand2011polynomial, solnon2015complexity} sets are real instances representing segmented images and meshes of 3D objects drawn from the pattern recognition literature. The \textit{biochemical} dataset \cite{gay2014subgraph} \changedOne{contains} matching problems \changedOne{taken from real} systems of biochemical reactions. The \textit{phase} dataset \cite{mccreesh2018subgraph} \changedOne{is comprised of} randomly generated Erdős–Rényi graphs with parameters  known to be very difficult for state-of-the-art solvers. 

\changedOne{We also include a problem set where the template graph is a small Erdős–Rényi graph and the world graph is composed of the webpages on the Notre Dame university website with directed edges representing links between pages \cite{albert1999diameter}.
In these instances, the template is randomly generated with $n_t$ nodes and $e_t$ edges where $5 \le n_t \le 15$ and $n_t \le e_t \le 3n_t$. The world graph is fairly sparse and has 325,729 vertices and 1,497,135 edges. We refer to this problem set as the \textit{www} dataset. We collected 50 template graphs for each value of $n_t$ for a total of 550 templates.}

For each instance, we run the algorithm for each equivalence level with the solver configured to count all solutions.  We record the number of representative solutions found, the total number of solutions that can be generated by interchanging, as well as the total run time for the instance. We terminate each run if the search is not completed after 600 seconds and record the statistics for the incomplete run. For each run, we measure the compression rate which is the number of representative solutions divided by the total number of solutions found. This quantity indicates the factor by which the form of equivalence chosen decreases the size of the solution space. \changedOne{All experiments were performed on an Intel Xeon Gold 6136 processor with 3 GHz, 25 MB of cache, and 125 GB of memory.}

\begin{table}[]
\caption{Proportion of Satisfiable Instances Fully Enumerated}
\centering
\label{tab:proportion-enum}
\begin{tabular}{@{}l|lllllll@{}}
\toprule
Dataset     & NE   & TE   & WE   & FE   & TEWE & CE   & NC   \\ \midrule
biochemical & 0.73 & 0.78 & 0.83 & 0.86 & 0.81 & 0.84 & 0.85 \\
LV          & 0.10 & 0.10 & 0.14 & 0.15 & 0.13 & 0.14 & 0.15 \\
scalefree   & 1.00 & 1.00 & 1.00 & 1.00 & 1.00 & 1.00 & 1.00 \\
images-cv   & 1.00 & 1.00 & 1.00 & 1.00 & 1.00 & 1.00 & 1.00 \\
meshes-cv   & 0.00 & 0.00 & 0.00 & 0.00 & 0.00 & 0.00 & 0.00 \\
si          & 0.83 & 0.92 & 0.83 & 0.94 & 0.90 & 0.85 & 0.93 \\
images-pr   & 1.00 & 1.00 & 1.00 & 1.00 & 1.00 & 1.00 & 1.00 \\
phase       & 0.00 & 0.00 & 0.00 & 0.00 & 0.00 & 0.00 & 0.00 \\
www         & 0.00 & 0.00 & 0.00 & 0.00 & 0.00 & 0.00 & 0.00 \\
\bottomrule
\end{tabular}
\end{table}

\begin{figure}
    \centering
    \includegraphics[width=\linewidth]{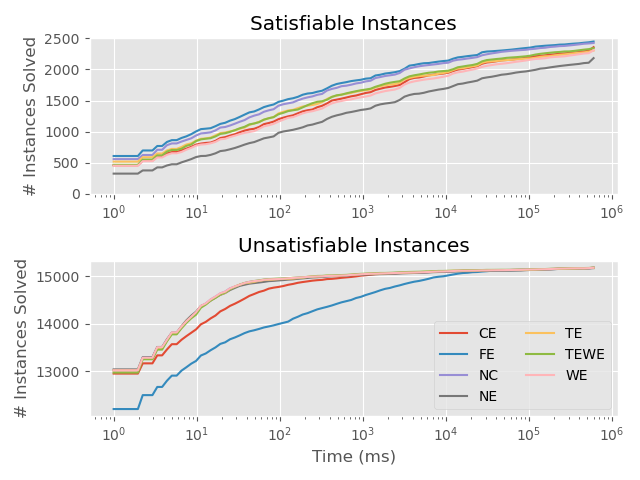}
    \caption{Number of satisfiable (top) and unsatisfiable (bottom) instances solved after a given amount of time. This is aggregated over all single channel benchmark data sets. For satisfiable instances, ``solved'' means having fully enumerated the solution space.}
    \label{fig:solveby}
\end{figure}

In Table \ref{tab:proportion-enum}, we record the proportion of satisfiable problems for which the solution space is fully enumerated by each algorithm within 600 seconds. 
For the \textit{biochemical}, \textit{LV}, and \textit{si} datasets, there is an increase of 5-10\% of all problems fully solved when using any form of equivalence. 
We further note that full equivalence always has the best performance followed by node cover equivalence. This is not surprising given that Proposition \ref{prop:node-cover-full-cand-equiv} states that full equivalence is the most expansive form of equivalence.

\figref{fig:solveby} portrays how many instances are solved by a given time and demonstrates that full equivalence performs best in solution space enumeration for satisfiable problems followed by node cover and the other forms of equivalence. 
The plot for unsatisfiable problems demonstrates drawbacks to using full candidate equivalence; the additional computation time to check equivalence is not needed if there are no solutions. 
Checking equivalence in the TE, WE, and NC routines are very cheap operations so there is little difference in the amount of unsatisfiable problems solved compared to the NE routine.

\begin{figure*}
    \centering
    \includegraphics[width=0.45\linewidth]{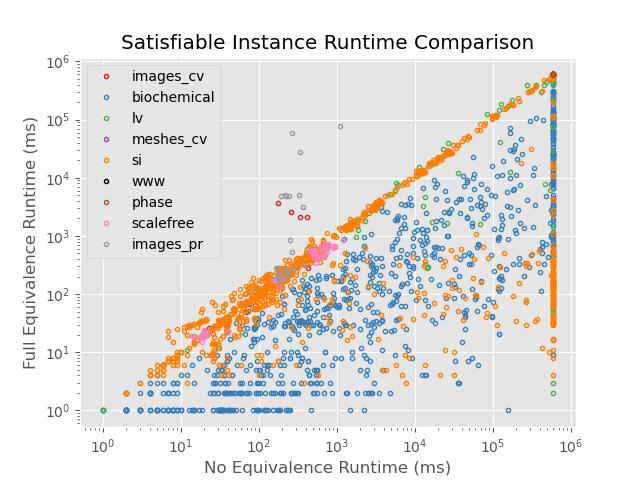}
    \includegraphics[width=0.45\linewidth]{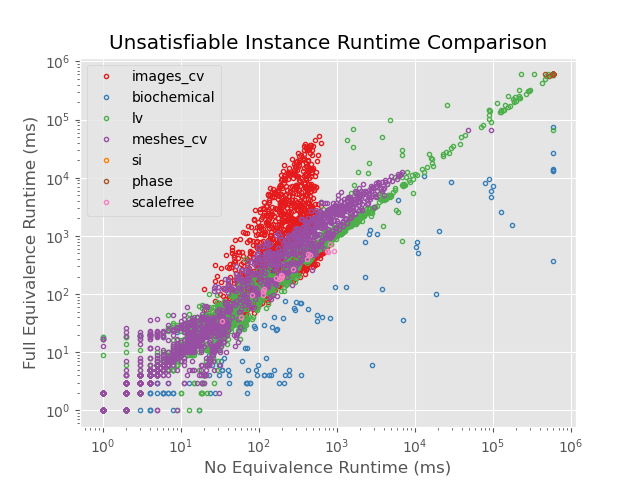}
    \caption{Comparison of individual run times for full enumeration between no equivalence and full equivalence runs for satisfiable problems (left) and unsatisfiable problems (right). \changedOne{Note the \textit{phase}, \textit{www}, and \textit{meshes\_cv} problems do not terminate for any instance and take the full $600$ second runtime, so they can be difficult to discern as each occupies the same spot in the upper right corner of the graphs.}}
    \label{fig:individual-runtimes}
\end{figure*}

\begin{figure*}
    \centering
    \includegraphics[width=0.45\linewidth]{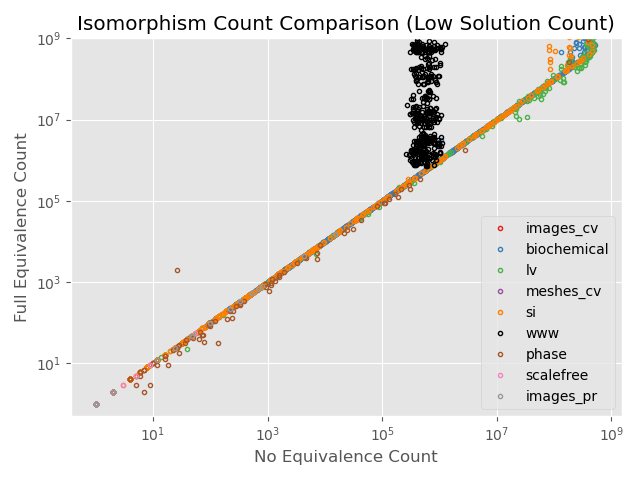}
    \includegraphics[width=0.45\linewidth]{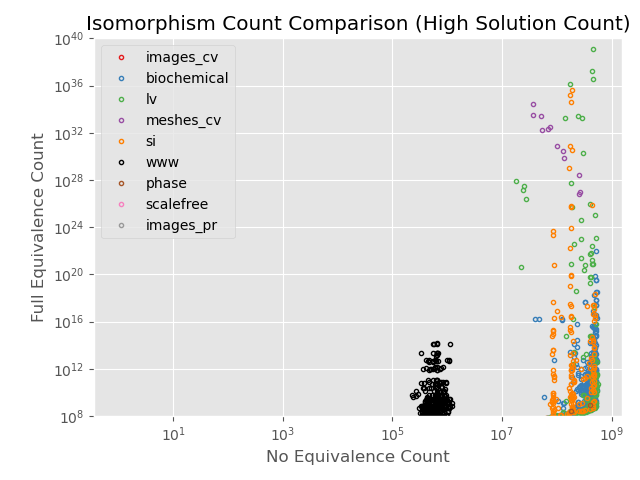}
    \caption{Comparison of isomorphism counts for full enumeration between no equivalence and full equivalence runs for problems with small \changedOne{($<10^9$)} numbers of isomorphisms (left) and problems with large \changedOne{($\ge 10^9$)} numbers of isomorphism (right). Take note of the scales chosen for each graph. For 110 instances the solver with full equivalence found greater than $10^{40}$ isomorphisms (the largest had $\approx 10^{384}$ isomorphisms), and they are not shown on these graphs.}
    \label{fig:individual-iso-counts}
\end{figure*}

\figref{fig:individual-runtimes} demonstrates the variation among and within the data sets by comparing run times for the NE and FE routines. 
We observe that it is primarily among the \textit{biochemical}, \textit{SI}, and \textit{LV} for which the full equivalence routine vastly outperforms the no equivalence routine often by several orders of magnitude.
On the other hand, the \textit{images-cv}, \textit{meshes-cv}, and \textit{images-pr} datasets are more challenging, especially for unsatisfiable problems. This may be due to wasted equivalence checks as there is no solution space to be compressed.

\figref{fig:individual-iso-counts} includes two plots to illustrate variation in solution counts when using the NE and FE routines. 
Each has different limits on the axes to emphasize different aspects. 
The left demonstrates that for problems with fewer than $10^9$ isomorphisms, 10 minutes is often enough time to fully enumerate the solution space without using equivalence. The number $10^9$ functions as an approximate upper bound for the number of solutions found in 10 minutes for any problem using the original Glasgow solver. \changedOne{We note that for the \textit{www} dataset, the base Glasgow solver finds at most approximately $10^6$ solutions, a significantly smaller number. This happens because that a significant portion of time is taken to load in the large world graph.}
The right plot shows problems for which the FE routine finds many orders of magnitude more solutions. In these highly symmetric problems, an equivalence-based approach is essential to fully understand the solution space. \changedOne{We observe that for the \textit{biochemical}, \textit{si}, \textit{lv}, and \textit{www}, we find many orders of magnitude more solutions when using full equivalence.} The largest disparity found between FE and NE solution counts is not displayed - FE found about $10^{384}$ solutions and NE found roughly $10^9$ solutions: a difference of 375 orders of magnitude.

\begin{figure}
    \centering
    \includegraphics[width=\linewidth]{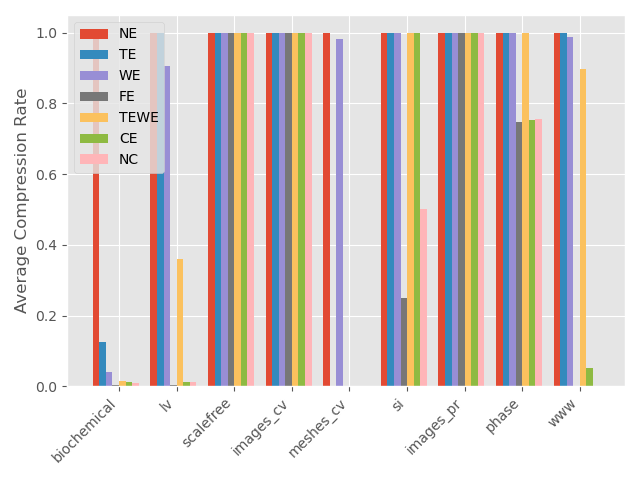}
    \caption{The average compression rate for each dataset and equivalence type.}
    \label{fig:compression-rate}
\end{figure}

\changedOne{Figure \ref{fig:compression-rate} demonstrates the different average compression rates across each dataset and equivalence level. As expected, FE, NC, and CE perform the best in terms of compression followed by TEWE and then depending on the dataset, either TE or WE. For the \textit{biochemical}, \textit{lv}, \textit{meshes\_cv}, and \textit{www} datasets, we observe on average, the solution space is compressed in size by an order of magnitude or more when using the CE, FE, or NC methods. On specific particularly symmetric problems from these datasets, we find the solution space can be compressed by tens or even hundreds of orders of magnitude when using FE or NC. For these cases, it is necessary to incorporate equivalence to come close to understanding the set of solutions to the subgraph isomorphism problem.}

\begin{figure}
    \centering
    \includegraphics[width=0.8\linewidth]{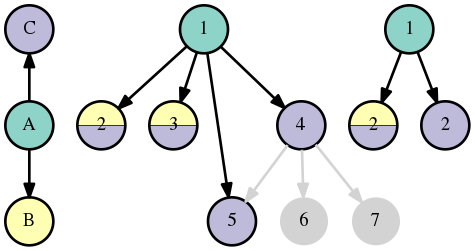}
    \caption{The template (left) and world (center) from \figref{fig:toy-example} recolored to represent solution $(B \to \{2, 3\}, A \to 1, C \to \{2, 3, 4, 5\})$. Each world node is colored with the same color as template nodes which can map to it or gray if no node maps to it. The right graph compresses the world graph by dropping nonparticipant nodes and combining nodes of the same color into a node with a label indicating the amount combined.}
    \label{fig:toy-example-compressed}
\end{figure}

\section{Compact Solution Representation}
\label{sec:compact-solution-representation}
\changedOne{As noted in prior sections, the solution space for subgraph matching is often combinatorially complex.  However, the notion of structural equivalence provides methods to diagram the solution space in a compact visual way that a user can understand.}
We explain this first for the toy example from \figref{fig:toy-example}. We begin with the base representation of one solution, $(A\to 1, B\to2, C\to3)$, which pairs template vertices with world vertices and extend it to incorporate equivalence.

If we use template structural equivalence, equivalent template nodes are interchangeable. We indicate this by pairing template equivalence classes with world vertices; producing a solution amounts to picking a unique representative from each class. In our example, as B and C are equivalent, we write $(A\to1, \{B, C\} \to 2, \{B, C\}\to 3)$.
World equivalence is represented similarly: the representation $(A\to 1, B\to \{2,3\},C\to 4)$ indicates $B$ can match with 2 or 3.

Table \ref{tab:toy-representations} illustrates the different numbers of representative solutions for each different equivalence level for the toy problem in \figref{fig:toy-example}. The candidate and node cover equivalence numbers are computed assuming $A$ is assigned first. As the full candidate and node cover equivalence levels have the broadest notion of equivalence, they have the most compression.

If we use candidate or node cover equivalence, equivalence classes are recomputed before each assignment. Hence, it may be the case that a template node is paired with a world equivalence class that has been previously assigned but has grown in size due to recomputing equivalence.
For example, if we first assign template node B to the equivalence class $\{2, 3\}$, we are forced to assign $A$ to 1. Finally, we recompute equivalence, and we find that $\{2, 3, 4, 5\}$ comprise an equivalence class, to which we assign our last template node $C$. We therefore have solution class $(B \to \{2, 3\}, A \to 1, C \to \{2, 3, 4, 5\})$.
We diagram this class in \figref{fig:toy-example-compressed} where we color each template node and its associated candidates the same color. 
The subgraph of all nodes and edges that participate in the representative solution is the \textbf{solution-induced world subgraph}. 
The final graph depicts the \textbf{compressed solution-induced world subgraph} where we drop all nonparticipant nodes and edges, and we combine like-colored nodes into ``supernodes'' with a label indicating the number of nodes joined. 
From this last graph, we can observe the original template graph structure among the participant world nodes.

\begin{table}[]
    \centering
        \caption{Number of representative solutions for each equivalence level in the toy problem in \figref{fig:toy-example}}
    \label{tab:toy-representations}
    \begin{tabular}{c|c|c}
    \toprule
    Eq. Level & \# Rep. Sols. & Example Sol. \\ \midrule
    NE &   18 & $A\to1, B\to2,C\to3$\\
    TE & 9 & $A\to1, \{B,C\}\to 2, \{B,C\}\to 3$ \\
    WE & 10 & $A\to 1, B\to \{2,3\}, C\to 4$ \\
    TEWE & 6 & $A\to1, \{B,C\}\to 5, \{B,C\}\to \{6,7\}$ \\
    CE & 5 & $A \to 1, B \to 2, C \to \{3, 4, 5\}$ \\
    FE & 2 & $A\to1, B\to \{2,3,4,5\}, C\to \{2,3,4,5\}$\\
    NC & 2 & $A\to1, B\to \{2,3,4,5\}, C\to \{2,3,4,5\}$\\ \bottomrule
    \end{tabular}

\end{table}

We use these graphical representations depict various symmetric features of our datasets.
As an example, we plot the template and world subgraph for an example from the \textit{biochemical} dataset in \figref{fig:biochemical}. 
From this depiction, we observe that there are multiple different sources from which equivalence may arise. 
One aspect is the large number of pairs of structurally equivalent nodes that are colored the same in the template graph.
A second source is leaf nodes on the template graph that can be mapped to large equivalence classes in the world graph.
By using an equivalence-informed subgraph search, we can expose exactly where these complexities arise. 
The compressed solution-induced world graph is depicted in \figref{fig:biochemical-compressed} and clearly shows the role each world node plays with respect to the template graph in a solution. 

\begin{figure*}
    \centering
    \includegraphics[width=0.49\linewidth]{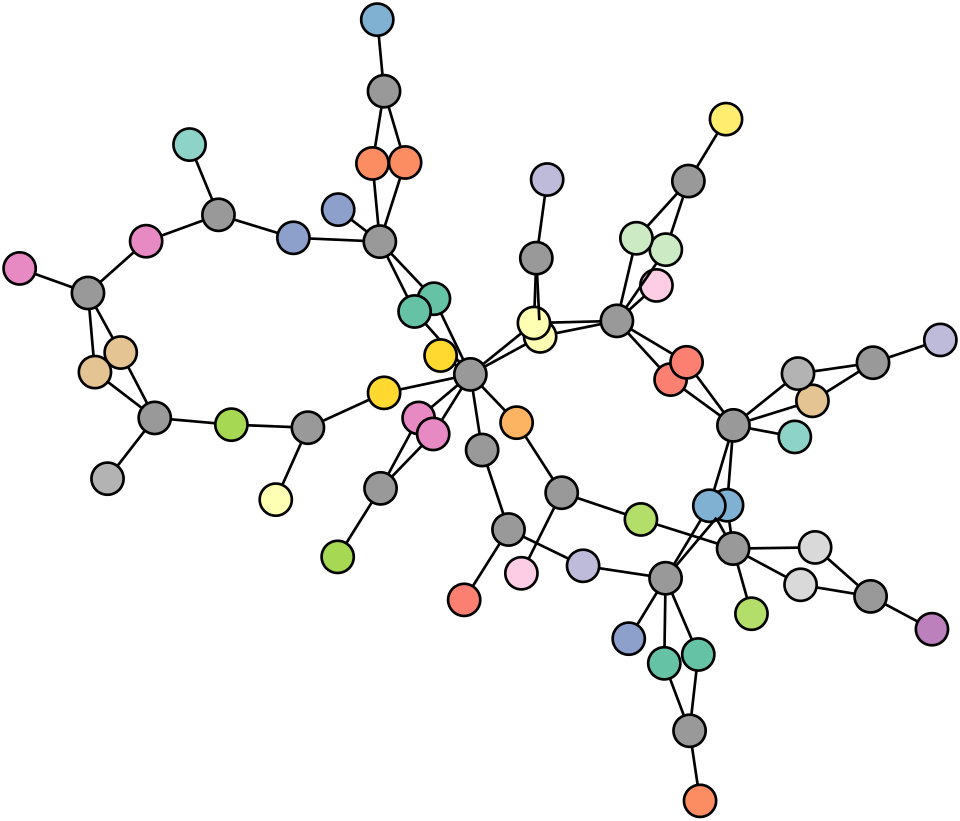}
    \includegraphics[width=0.49\linewidth]{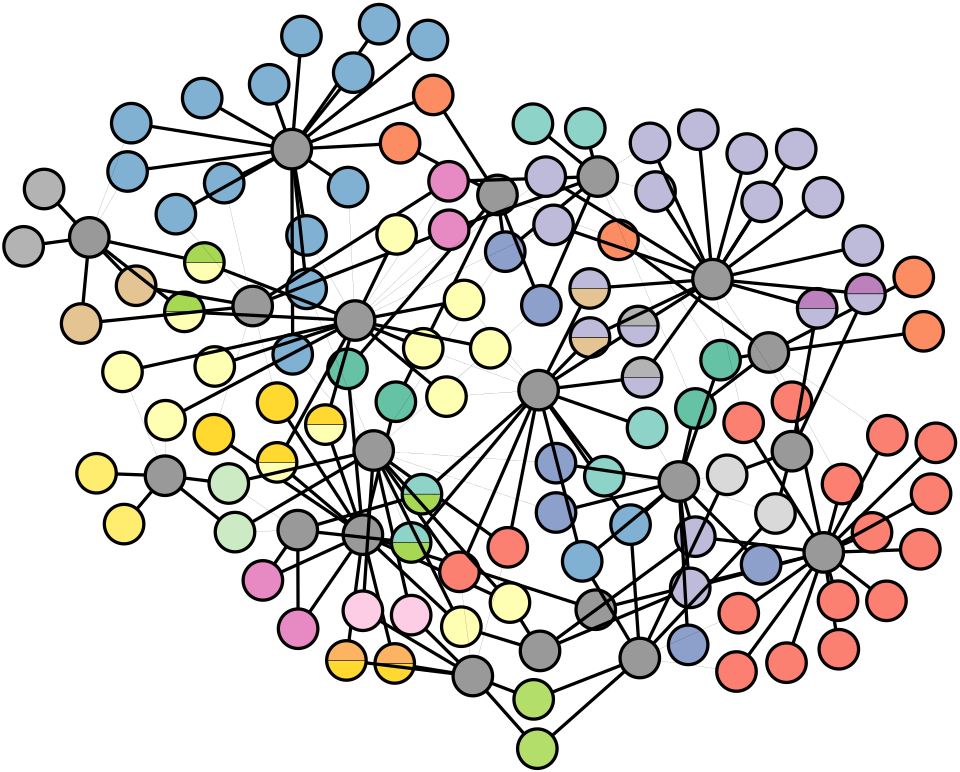}
    \caption{A biochemical reactions \cite{gay2014subgraph} template graph (left) and the solution-induced world subgraph (right) for a solution class comprised of $9.18\times 10^{13}$ solutions. Dark gray nodes are nodes with a single candidate. Nodes with the same non-gray color in the world subgraph are fully candidate equivalent. Nodes with two or more colors were part of one class at an early stage of subgraph search which was later merged into another class. All solutions represented by the compressed solution can be generated by mapping templates nodes of one color to world nodes with the same color.}
    \label{fig:biochemical}
\end{figure*}

\begin{figure}
    \centering
    \includegraphics[width=\linewidth]{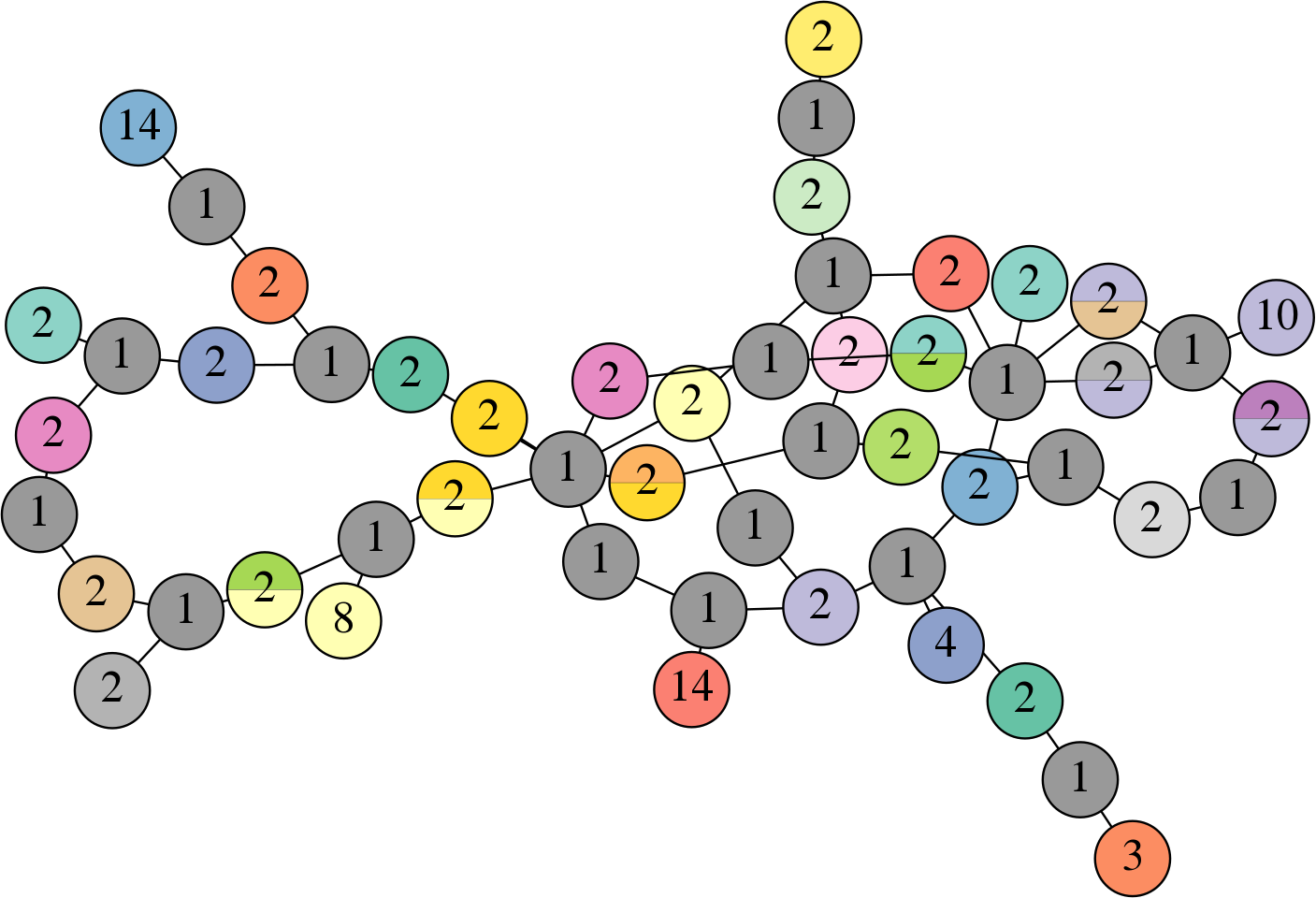}
    \caption{The world graph from \figref{fig:biochemical} with equivalent nodes joined into supernodes with numbers indicating the size of the class.}
    \label{fig:biochemical-compressed}
\end{figure}



\section{Application to Multiplex  Networks}
\label{sec:multiplex}

\subsection{Multiplex MultiGraph Matching}
Often analysts wish to encode attributed information into the nodes and edges of a graph and allow for more than one interaction to occur between nodes. For example, a transportation network may have multiple modes of travel between hubs (e.g., trains and subways).
Formally, if we have $K$ distinct edge labels, then a \textbf{multiplex multigraph} is a $K+1$-tuple $(V, E^1, E^2, \ldots, E^K)$ where $V$ is the set of vertices, and $E^i : V\times V \to \mathbb{Z}_{\ge 0}$ is a function dictating how many edges there are of label $i$ between two vertices. Intuitively, a multiplex multigraph is a collection of $K$ multigraphs which share the same set of nodes. The index $i$ of the edge function is the ``channel'' $i$, and we refer to the edges given by edge function $E^i$ as the edges in channel $i$, and the graph $(V, E^i)$ as the graph in channel $i$.

A \textbf{multiplex subgraph isomorphism} $f:V_T \to V_W$ preserves the number of edges in each channel. Given template $(V_T, E^1_T,\ldots,E^K_T)$ and world $(V_W, E^1_W,\ldots,E^K_W)$, for any $u, v \in V_T$, we require $E^i_W(f(u), f(v)) \ge E^i_T(u, v)$, i.e., there need to be enough edges between $f(u)$ and $f(v)$ to support the edges between $u$ and $v$. Definitions for equivalence also extend naturally: we say $v \sim_s w$ if in each channel $i$ for each $u \ne v, w$, $E^i(v, u) = E^i(w, u)$, $E^i(u, v) = E^i(u, w)$, and $E^i(v, w) = E^i(w, v)$. The other forms of equivalence and related theorems all generalize similarly.

Recently, significant work has been done on developing algorithms for finding multiplex subgraph isomorphisms. \cite{ingalalli2016sumgra} develops an indexing approach based on neighborhood structure in multichannel graphs. \cite{micale2020multiri} extends the single channel package \cite{bonnici2013subgraph} to handle the multichannel case and focuses on using intelligent vertex ordering for finding isomorphisms. \cite{moorman2018filtering} utilizes a constraint programming approach for filtering out candidates which is extended in \cite{moorman2021subgraph}. A similar filtering approach is taken in \cite{liu2019g}. \cite{MatchedFilter2019} relaxes the problem to a continuous optimization problem which is then solved and projected back onto the original space.

\subsection{Multiplex Experiments}

We assess the performance of our equivalence enhancements on the Glasgow solver, adapted to handle multiplex subgraph isomorphism problems. The adaptations involve minimal changes to the base algorithm, to ensure that matches are only made if they preserve the edges in every channel. To eliminate more candidates, we also perform a prefilter using the statistics and topology filters from \cite{moorman2018filtering} as well as maintain the subgraphs in each channel as the supplemental graphs used in the Glasgow algorithm.

We consider datasets including those from \cite{moorman2021subgraph} and which represent both real world examples and synthetically generated data. The real world examples include a transportation network in Great Britain \cite{fan2012graph}, an airline network \cite{cardillo2013emergence}, a social network built on interactions on Twitter related to the Higgs Boson \cite{de2013anatomy}, and COVID data \cite{zucker2021leveraging}. \changedOne{For the transportation and twitter networks, the template is extracted from the world graph.}
The synthetically generated datasets are examples which represent emails, phone calls, financial transactions, among other interactions between individuals \changedOne{and are all generated as part of the DARPA-MAA program \cite{GORDIAN,IVYSYS,PNNL}}. 
The subgraph isomorphisms to be detected may be a group of actors involved in adversarial activities including human trafficking and money laundering. The statistics regarding these different subgraph isomorphism problems are described in Table \ref{tab:multiplex-graphs}. \changedOne{For more details on these particular datasets, see \cite{moorman2021subgraph}.}

\begin{table}[]
    \caption{Data on Multiplex Graphs}
    \label{tab:multiplex-graphs}
    \centering
    \begin{tabular}{cccccc}
    \toprule
            & \multicolumn{2}{c}{Template} & \multicolumn{2}{c}{World} & \\
    Dataset & Nodes & Edges & Nodes & Edges & Chan. \\ \midrule
    Brit. Trans. & 53 & 56 & 262377 & 475502 & 5 \\
    Higgs Twitter & 115 & 2668 & 456626 & 5367315 & 4 \\
    Airlines & 37 & 210 & 450 & 7177 & 37\\ \midrule
    PNNL RW & 74 & 35 & 158 & 6407 & 3 \\
    PNNL v6-b0-s0 & 74 & 1620 & 22996 & 12318861 & 7 \\
    PNNL v6-b5-s0 & 64 & 1201 & 22994 & 12324975 & 7 \\
    PNNL v6-b1-s1 & 75 & 1335 & 22982 & 12324340 & 7 \\
    PNNL v6-b7-s1 & 81 & 1373 & 23011 & 12327168 & 7 \\
 \midrule 
    GORDIAN v7-1 & 156 & 3045 & 190869 & 123267100 & 10 \\
    GORDIAN v7-2 & 92 & 715 & 190869 & 123264754 & 10 \\ \midrule
    IvySys v7 & 92 & 195 & 2488 & 5470970 & 3 \\
    IvySys v11 & 103 & 387 & 1404 & 5719030 & 5 \\ \midrule
    COVID & 28 & 38 & 87580 & 1736985 & 9 \\ \midrule 
    Twitter - ER & 5-15 & 4-31 & 456626 & 5367315 & 4 \\
    \bottomrule
    \end{tabular}
\end{table}

The multiplex datasets are much larger than the single-channel graphs in the previous section, with the largest world graphs having hundreds of thousands of nodes and hundreds of millions of edges. The synthetic datasets are divided into three groups based on which organization generated the dataset: PNNL \cite{PNNL}, GORDIAN \cite{GORDIAN}, and IvySys Technologies \cite{IVYSYS}.

For our experiments, we examine the same seven modes of equivalence used in the single channel case, but with a time limit of one hour to count as many solutions as possible. These experiments were run on the same computer using our adapted version of the Glasgow solver. The amount of time required to enumerate all the solutions is displayed in Table \ref{tab:multichannel-times} and the number of solutions found with a given method is displayed in Table \ref{tab:multichannel-solutions}.
\begin{table*}[]
    \centering
    \caption{Time (s) to enumerate solution spaces of multichannel problems. Experiments timed out at one hour. The Twitter-ER dataset is averaged over a collection of problems and timed out after ten minutes.}
    \label{tab:multichannel-times}
\begin{tabular}{lrrrrrrr}
\toprule
Algorithm &       CE &       FE &       NC &       NE &       TE &     TEWE &       WE \\
Dataset              &          &          &          &          &          &          &          \\
\midrule
Brit. Trans.    &  3600 &  3600 &  3600 &  3600 &  3600 &  3600 &  3600 \\
Higgs Twitter           &  3600 &   \textbf{369} &   456 &  3600 &  3600 &  3600 &  3600 \\
Airlines           &     0.34 &     \textbf{0.24} &  1985 &  3600 &  1329 &  3600 &  3600 \\ \midrule
PNNL RW           &  3600 &  3600 &  3600 &  3600 &  3600 &  3600 &  3600 \\
PNNL v6-b0-s0     &    41.6 &    42.1 &    41.8 &    \textbf{41.3} &    41.7 &    41.4 &    41.6 \\
PNNL v6-b1-s1     &   241 &   \textbf{240} &   241 &   \textbf{240} &   242 &   \textbf{240} &   \textbf{240} \\
PNNL v6-b5-s0     &    62.6 &    58.1 &    58.9 &    \textbf{58.0} &    62.3 &    62.3 &    63.0 \\
PNNL v6-b7-s1     &  1133 &   200 &   201 &  3600 &   \textbf{188} &   211 &  1138 \\ \midrule
GORDIAN v7-1 &  3600 &   \textbf{327} &  3600 &  3600 &  3600 &  3600 &  3600 \\
GORDIAN v7-2 &  3600 &   \textbf{316} &  3600 &  3600 &  3600 &  3600 &  3600 \\ \midrule
IvySys v7         &  3600 &  3600 &  3600 &  3600 &  3600 &  3600 &  3600 \\
IvySys v11        &  3600 &  3600 &  3600 &  3600 &  3600 &  3600 &  3600 \\ \midrule
COVID & 3600 & 3600 & 3600 & 3600 & 3600 & 3600 & 3600 \\ \midrule
Twitter - ER & 514.7 & 505.4 & \textbf{494.8} & 536.0 & 536.7 & 544.2 & 541.3 \\
\bottomrule
\end{tabular}

\end{table*}
\begin{table*}
\centering
\caption{Number of solutions found for multichannel problems within one hour. The Twitter-ER dataset is averaged over a collection of problems and is timed out after ten minutes.}
\label{tab:multichannel-solutions}
\begin{tabular}{lrrrrrrr}
\toprule
Algorithm &       CE &        FE &        NC &        NE &        TE &      TEWE &        WE \\
Dataset              &           &           &           &           &           &           &           \\
\midrule
Brit. Trans.    &  1.48e+11 &  \textbf{2.34e+15} &  4.97e+08 &  1.27e+07 &  2.48e+12 &  2.02e+12 &  1.17e+07 \\
Higgs Twitter           &  1.38e+14 &  \textbf{3.23e+14} &  \textbf{3.23e+14} &  5.65e+06 &  6.44e+06 &  5.72e+06 &  6.95e+06 \\
Airlines           &  \textbf{3.67e+09} &  \textbf{3.67e+09} &  \textbf{3.67e+09} &  3.55e+09 &  3.65e+09 &  8.11e+08 &  2.35e+09 \\ \midrule
PNNL RW           &  3.50e+09 &  2.78e+11 &  \textbf{4.72e+11} &  8.59e+08 &  2.01e+10 &  5.00e+09 &  8.70e+08 \\
PNNL v6-b0-s0     &  \textbf{1.15e+03} &  \textbf{1.15e+03} &  \textbf{1.15e+03} &  \textbf{1.15e+03} &  \textbf{1.15e+03} &  \textbf{1.15e+03} &  \textbf{1.15e+03} \\
PNNL v6-b1-s1     &  \textbf{1.15e+03} &  \textbf{1.15e+03} &  \textbf{1.15e+03} &  \textbf{1.15e+03} &  \textbf{1.15e+03} &  \textbf{1.15e+03} &  \textbf{1.15e+03} \\
PNNL v6-b5-s0     &  \textbf{1.15e+03} &  \textbf{1.15e+03} &  \textbf{1.15e+03} &  \textbf{1.15e+03} &  \textbf{1.15e+03} &  \textbf{1.15e+03} &  \textbf{1.15e+03} \\
PNNL v6-b7-s1     &  \textbf{3.14e+08} &  \textbf{3.14e+08} &  \textbf{3.14e+08} &  8.57e+07 &  \textbf{3.14e+08} &  \textbf{3.14e+08} &  \textbf{3.14e+08} \\ \midrule
GORDIAN v7-1 &  1.11e+12 &  \textbf{9.13e+12} &  1.35e+10 &  1.61e+07 &  7.84e+09 &  5.34e+09 &  1.58e+07 \\
GORDIAN v7-2 &  2.14e+11 &  \textbf{1.35e+16} &  1.15e+15 &  1.72e+07 &  3.88e+08 &  3.19e+08 &  1.65e+07 \\ \midrule
IvySys v7         &  2.04e+14 &  \textbf{8.04e+96} &  2.09e+90 &  1.75e+09 &  2.67e+47 &  5.84e+45 &  5.39e+07 \\
IvySys v11        &  7.41e+10 &  \textbf{3.64e+89} &  5.09e+66 &  1.77e+09 &  4.43e+72 &  6.64e+71 &  4.88e+07 \\ \midrule
COVID & 5.27e+14 & \textbf{7.45e+21} & 3.11e+20 & 9.63e+06 & 9.53e+06 & 4.51e+07 & 1.31e+07 \\ \midrule
Twitter - ER & 3.52e+08 & 8.84e+09 & \textbf{1.40e+11} & 7.17e+05 & 7.41e+05 & 6.14e+05 & 6.78 e+05 \\
\bottomrule
\end{tabular}
\end{table*}
A quick inspection of the times illustrates that in a few cases (Airlines, GORDIAN, and Higgs Twitter), using full equivalence can enumerate the full solution space an order of magnitude faster than any other approach. This speedup is reflected in the solution count table for which FE finds significantly many more solutions. The other methods only find a mere fraction of the total solutions. The NC method often appears to be the second best both in terms of solutions found and time taken to enumerate all. This makes sense given Proposition \ref{prop:equiv-hierarchy}. TE appears to be the third best method which can be explained by the simplicity of implementation and having no need to recompute equivalence. WE and CE are not competitive with the other methods.
The datasets bear different qualities that illustrate why certain levels of equivalence work better than others. We discuss a few datasets in detail.

\subsubsection{PNNL}
\changedOne{The PNNL template and world graphs \cite{PNNL} are generated to model specific communication, travel, and transaction patterns from real data and the templates are then embedded into the world graph}. For these instances, the counting problem is almost entirely solved after applying the initial filter and the solution space is understood by equivalence in the template. 
\begin{figure}
    \centering
    \includegraphics[width=0.9\linewidth]{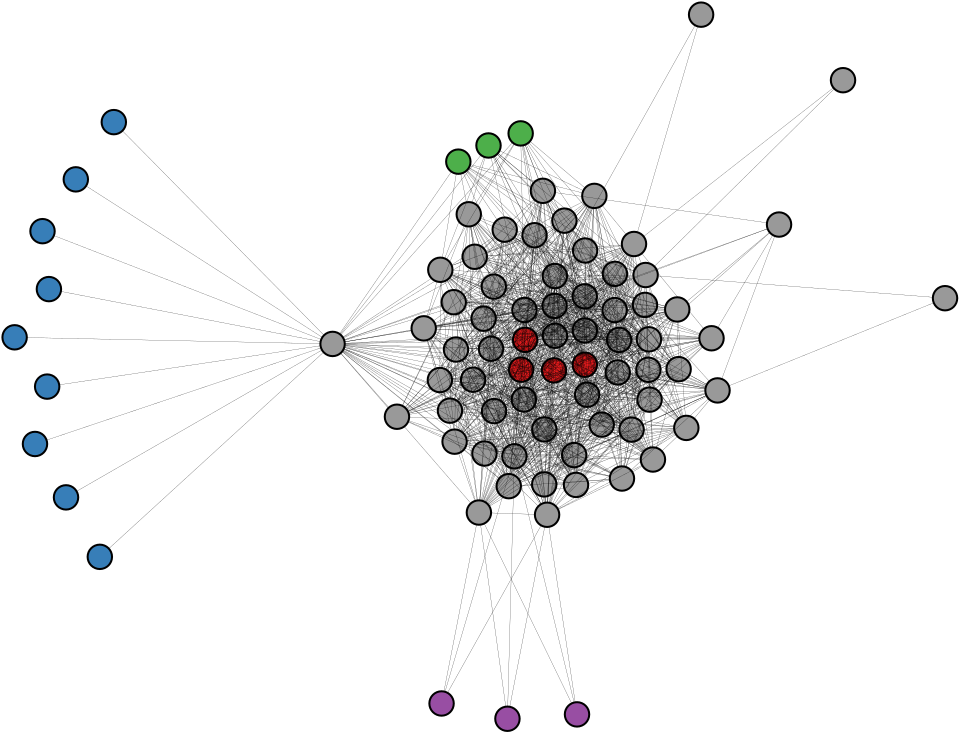}
    \caption{Template Graph for PNNL v6-b7-s1. Non-gray nodes of the same color are structurally equivalent.}
    \label{fig:pnnl-v6-b7-s1}
\end{figure}
For example, observe the template displayed in \figref{fig:pnnl-v6-b7-s1}.
The number of solutions generated equals the count of solutions generated by permutations of the template nodes for a single representative solution. We have a group of 9, a group of 4, and two groups of 3 interchangeable nodes, meaning any solution can generate $9!4!3!3!$ more solutions. All variants on the PNNL problems illustrate this behavior. 

\subsubsection{GORDIAN}

\changedOne{The GORDIAN datasets \cite{GORDIAN} have a much larger templates and worlds than PNNL and they are generated separately in an agent-based fashion to match the daily routines and travel patterns of a certain population of people}. Only the FE method fully enumerates the solution space, but the NC and CE methods come close to a full enumeration.
\begin{figure*}
    \centering
    \includegraphics[width=0.3\linewidth]{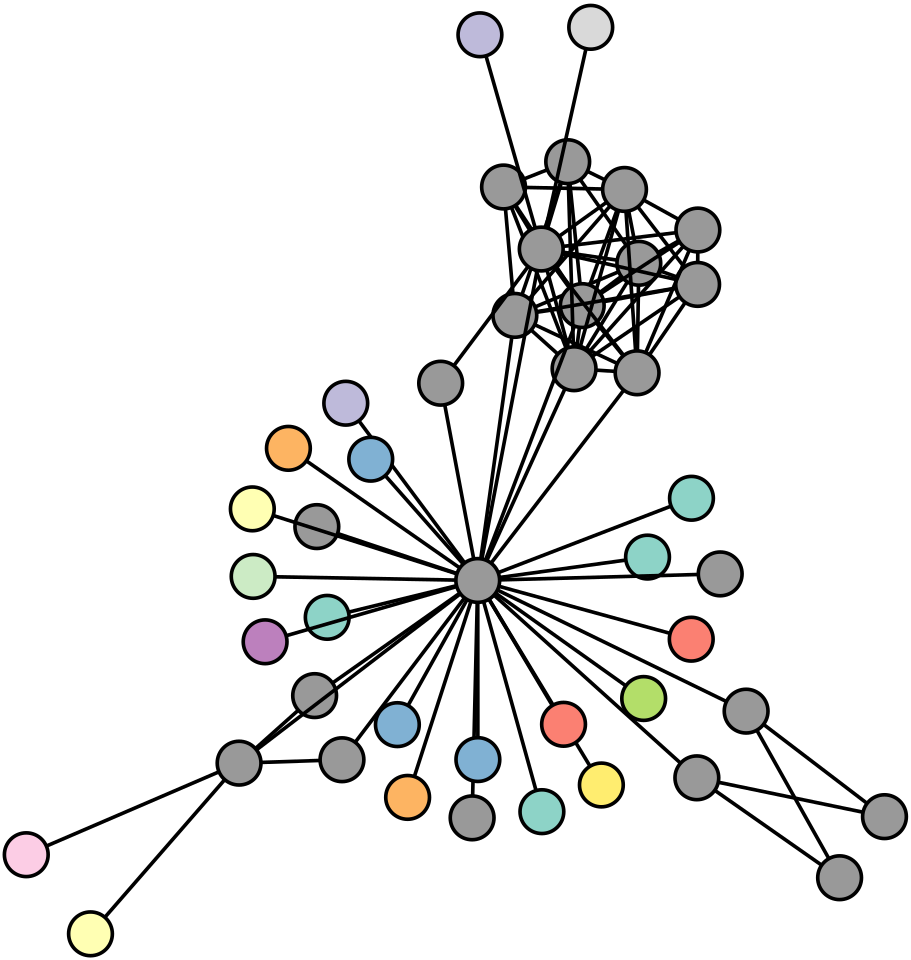}
    \includegraphics[width=0.3\linewidth]{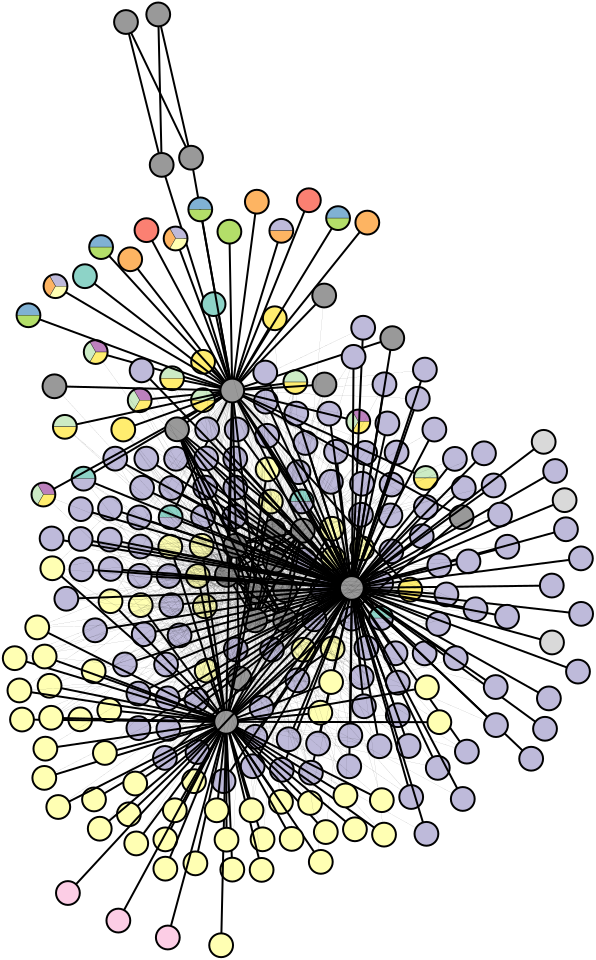}
    \includegraphics[width=0.35\linewidth]{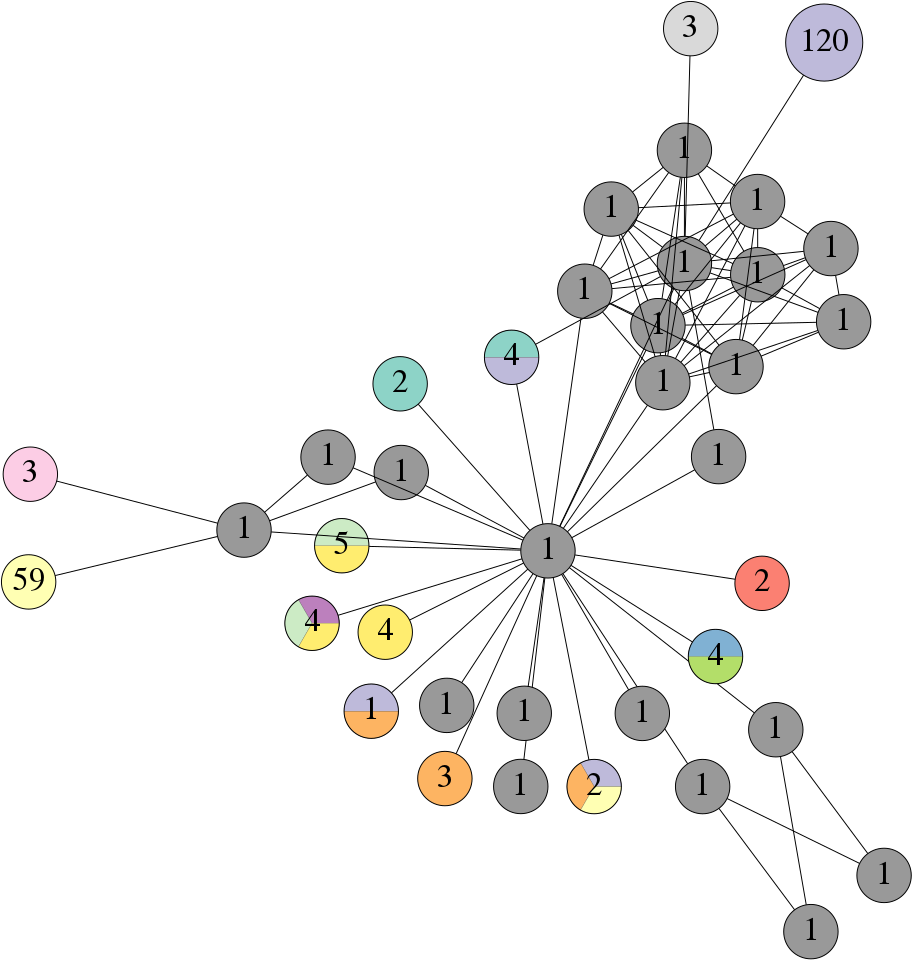}
    \caption{Template (left), solution-induced world subgraph (center) and compressed solution-induced world subgraph (right) for a solution class which can generate about $3 \times 10^{12}$ solutions to GORDIAN v7-2 \cite{GORDIAN}. World nodes of the same color are fully candidate equivalent and are candidates of the template node of the same color. All solutions represented by this compressed solution can be generated by mapping each colored node to one of groups of world nodes with the same color.}
    \label{fig:gordian-v2}
\end{figure*}
\figref{fig:gordian-v2} illustrates the symmetries for one solution class; the template graph possesses a large group of leaf nodes. After mapping the central node to a candidate, the leaves need only be mapped to neighbors of this candidate. These graphs demonstrate a trade-off between node specificity and symmetry: template nodes with fewer edges exhibit great amounts of symmetry, whereas dense subgraphs are restricted in their candidates and have minimal symmetry. The right graph in \figref{fig:gordian-v2} depicts a compressed version of the world graph induced by this solution class from which $3\times 10^{12}$ solutions may be generated.

\begin{figure*}
    \centering
    \includegraphics[width=0.45\linewidth]{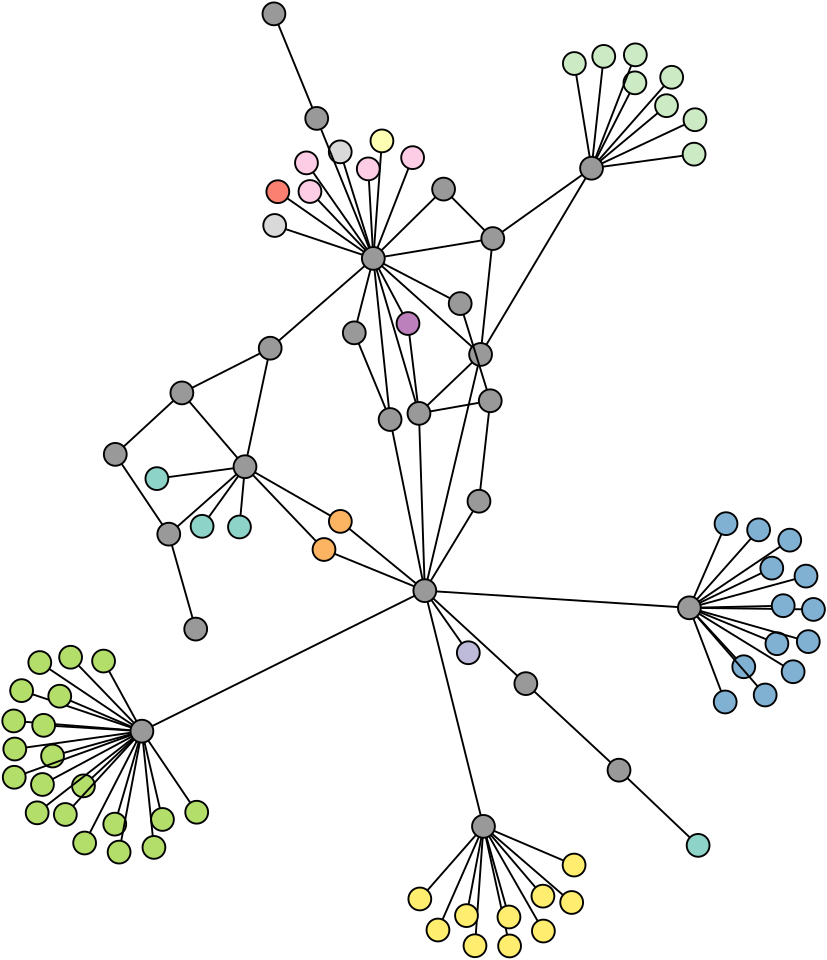}
    \includegraphics[width=0.45\linewidth]{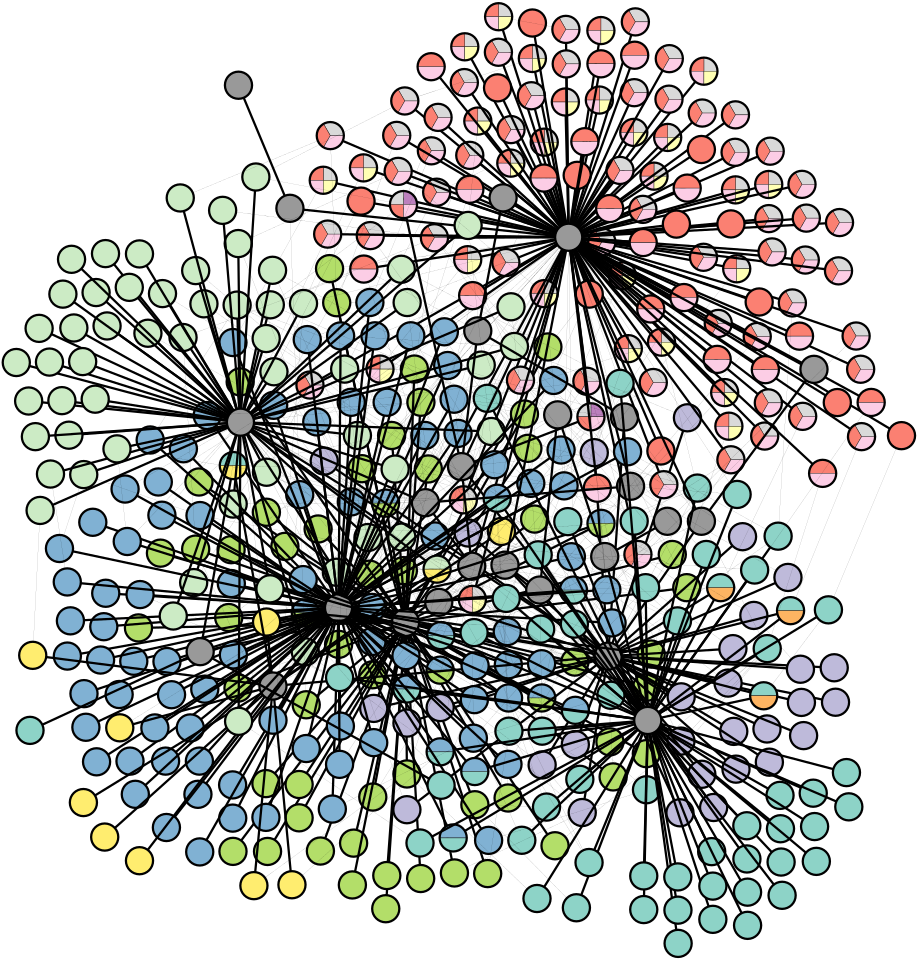}
    \caption{Template (left) and solution-induced world subgraph (right) for a solution class from which $7.82\times 10^{103}$ solutions to IvySys v7 \cite{IVYSYS} can be generated. World nodes of the same color are fully candidate equivalent and are candidates of the template node of the same color. All solutions represented by this compressed solution can be generated by mapping each colored node to one of groups of world nodes with the same color.}
    \label{fig:ivysys-v7}
\end{figure*}

\subsubsection{IvySys}

\changedOne{The Ivysys template and world graphs \cite{IVYSYS} are separately generated to match the degree distribution and email behavior of the Enron email dataset and have the most complex solution space.} None of the methods were successful at enumerating all solutions. The vastness of the solution space is in contrast to the size of the graphs which only have thousands of nodes. The complexity emerges from the preponderance of template leaf nodes as shown in \figref{fig:ivysys-v7}, depicting one solution class from which $7.82\times 10^{103}$ solutions may be generated. \figref{fig:ivysys-v7-venn} depicts the compressed representation of the world subgraph for this solution as well as a Venn diagram displaying candidates of certain template nodes.

The TE solver finds an astonishing $10^{47}$ solutions for IvySys v7. However, using the FE method still dramatically increases the solution count, by 
mapping these large template equivalent classes into larger world equivalence classes. 
An equivalence-informed subgraph search is essential as the NE method finds only $1.75 \times 10^9$ solutions, 90 orders of magnitude less than the FE search. Furthermore, a typical subgraph search would assign each group of leaf nodes sequentially meaning only the candidates of the last group would be explored. Incorporating symmetry gives a fuller vision of the solution space.

\begin{figure*}
    \centering
    \includegraphics[width=0.8\linewidth]{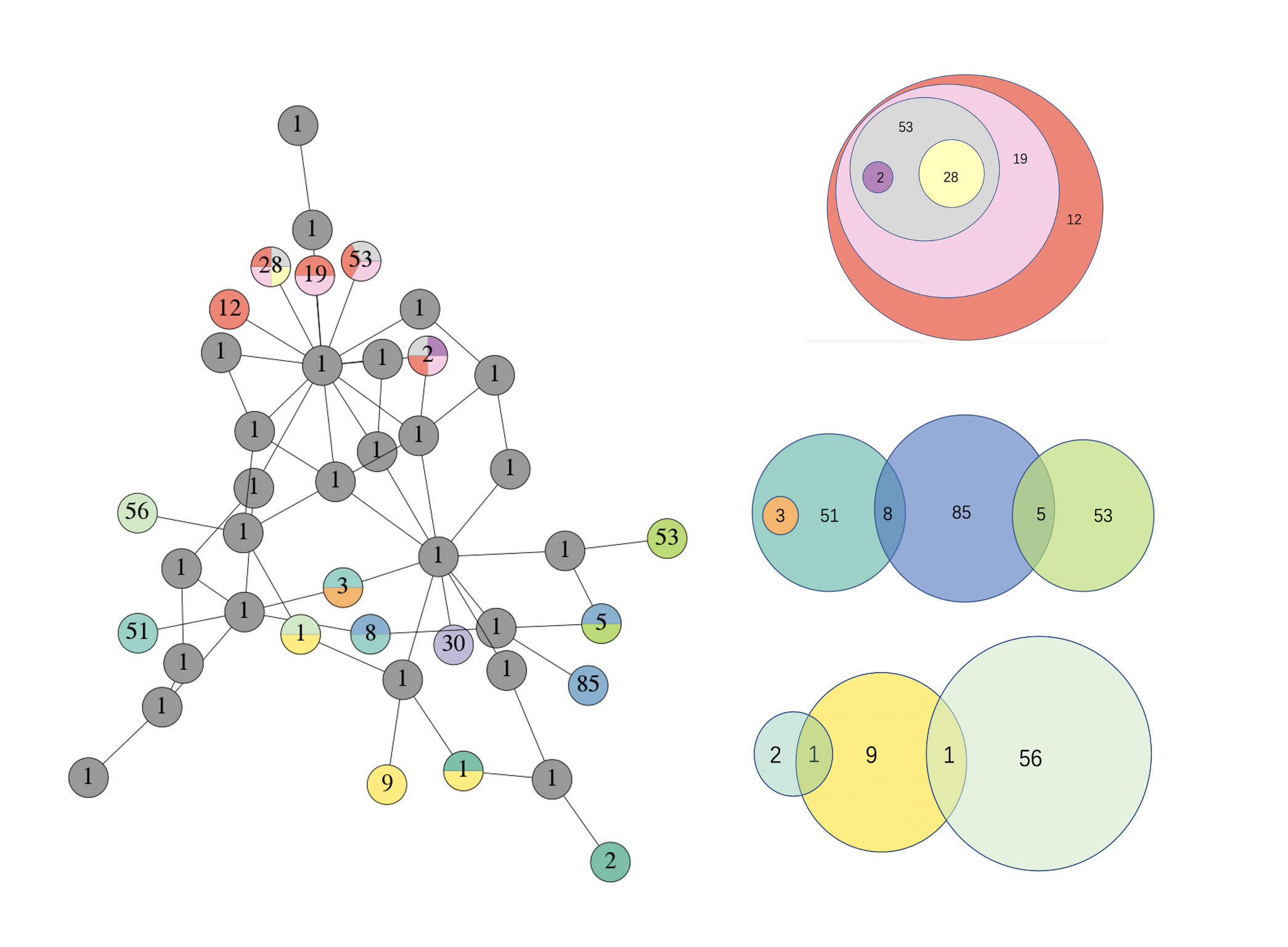}
     \caption{IvySys v7 \cite{IVYSYS} Compressed solution-induced world graph (left) and the Venn diagram representation of intersecting candidate sets in world graph(right) for a solution class from which $7.82\times 10^{103}$ solutions to  can be generated. The number in each section in the Venn diagram represents the size of a node cover equivalence class in the world graph. All solutions represented by this compressed solution can be generated by mapping each colored node in the template to the set in the Venn diagram with the same color.}
    \label{fig:ivysys-v7-venn}
\end{figure*}

\subsubsection{COVID}
We lastly apply our algorithm to the problem of querying a knowledge graph representing known causal relations between a large variety of biochemical entities. This problem arises from a desire to extracting causal knowledge in an automated fashion from the research literature. In \cite{zucker2021leveraging}, a knowledge graph is assembled from multiple sources including the COVID-19 Open Research Dataset \cite{wang2020cord}, the Blender Knowledge Graph \cite{wang2020covid}, and the comparative toxigenomics database \cite{davis2021ctd}. The authors of \cite{zucker2021leveraging} then create a query representing how SARS-CoV-2 might cause a pathway leading to a cytokine-storm in COVID-19 patients, but is generalized to detect other possible confounding factors in the pathway.

When rephrased as a multichannel subgraph isomorphism problem, template and world nodes represent biochemical entities. 
Some template nodes are specified, and others are labeled as a chemical, gene or protein. 
The 9 channels in this problem are various known types of interactions between entities, e.g., activation. A solution is an assignment of each node which has the desired chemical interactions.

As can be seen in Tables \ref{tab:multichannel-times} and \ref{tab:multichannel-solutions}, there is an abundance of solutions to this problem, and incorporating equivalence greatly enhances our ability to understand the solution space.
\figref{fig:covid} depicts the template and Venn diagrams of candidates sets for one solution class and exposes unspecified template nodes with a large amount of candidates. Such information is useful to an analyst for determining confounding factors in a pathway and suggesting label information or interactions to add to better specify the entire solution space.

\subsubsection{Higgs Twitter Erdős–Rényi Experiments}
\changedOne{Lastly, we perform a similar experiment as we did for single channel graphs using small Erdős–Rényi graphs as our templates and our largest graph, the Higgs Twitter dataset, as our world graph. 
We generate a multichannel template graph by overlaying 4 different graphs corresponding to each channel each generated as an Erdős–Rényi graph with $p = \frac{\log n_t}{8n_t}$ where $n_t$ is the number of template nodes. 
This value $p$ is chosen so that the graph will be connected with high probability. We generate 45 connected graphs in this way for each of $n_t=5,7,9,11,13,15$. We then compute the number of isomorphisms counted for each method within 10 minutes. For these problems, we precompute the world structural equivalence classes of the Higgs Twitter graph prior to running our algorithms. 
The average isomorphism count for each equivalence method and template size is  depicted in Figure \ref{fig:twitter-average-iso-count}. 
The overall averages for total runtime and isomorphism count are included in Tables V and VI under the \textit{Twitter-ER} dataset.}

\changedOne{From these results, we observe that the NC, FE, and CE methods find significantly more solutions than the base routine whereas the other equivalence methods do not improve on the NE method. NC performs the best both in terms of the number of isomorphisms found and the total amount of time which we speculate is due to its lightweight computation and ability to capture most of the equivalence. FE and CE are a few orders of magnitude worse, and the remaining methods TE, WE, and TEWE fail to provide significant benefit over the base method and in fact does worse when involving world structural equivalence. That these methods do not improve much we can explain by the fact that nodes in multichannel Erdős–Rényi graphs are fairly unlikely to be structurally equivalent. All in all, these experiments demonstrate even when using randomly generated template graphs, significant improvements can be had in incorporating equivalence into the algorithm. However, certain modes of equivalence may be more appropriate for certain classes of graphs and some care must be taken to ensure that the level of equivalence chosen actually helps with solving the problem.}

\begin{figure}
    \centering
    \includegraphics[width=0.9\linewidth]{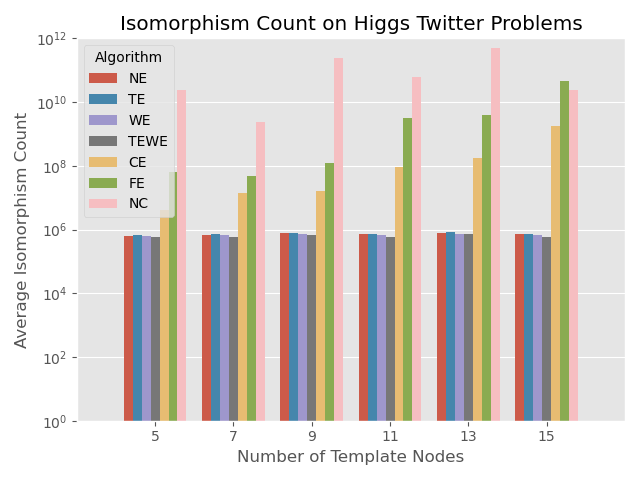}
    \caption{\changedOne{The average number of subgraph isomorphisms found for each equivalence level where the templates are small Erdős–Rényi graphs and the world is the \textit{Higgs Twitter} graph.}}
    \label{fig:twitter-average-iso-count}
\end{figure}

\begin{figure*}
    \centering
    \includegraphics[width=0.8\linewidth]{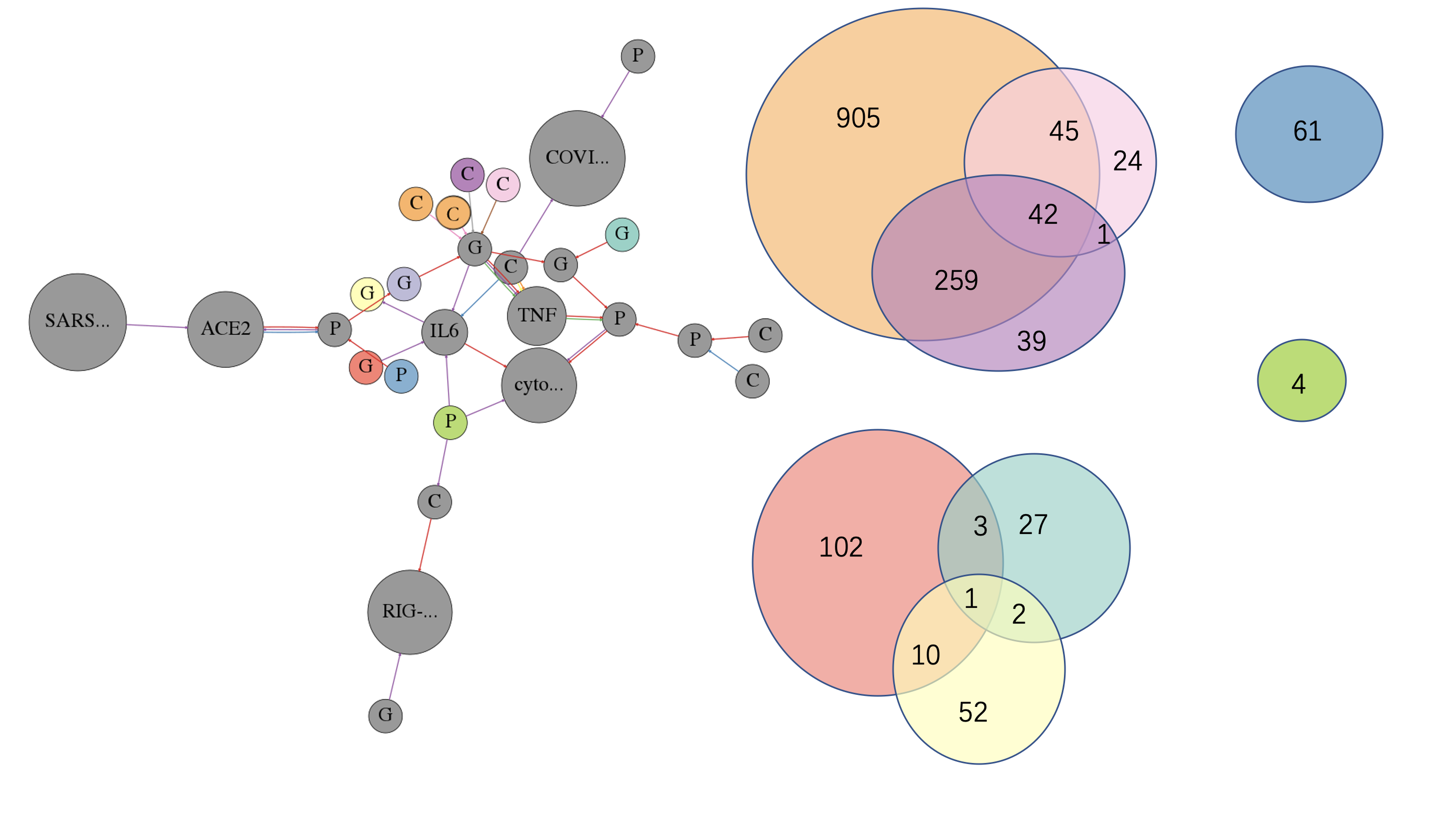}
    \caption{COVID-19 \cite{zucker2021leveraging} template (left) and the Venn diagram of candidate sets in world graph (right) from which $2.6 \times 10^{18}$ solutions can be generated in one solution class. Each section in the Venn Diagram represents a node cover equivalence class, and the number in the section is the size of the class. A few template nodes were specified at the start whereas others simply received a node label of C, P, or G indicating chemical, protein, and gene respectively. The solutions may be generated by mapping non-gray template nodes of one color to world nodes in the Venn diagram section of the same color.}
    \label{fig:covid}
\end{figure*}
\section{Conclusion}
In this work, we have developed a theory for static and dynamic notions of equivalence and presented conditions under which node assignments can be interchanged while preserving isomorphisms. 
With minimal changes to a subgraph isomorphism routine to incorporate equivalence during a tree search, we can dramatically reduce the amount of time to solve a problem and get a compact characterization of the solution space. For instances with minimal symmetry, little is to be gained, but for problems with large symmetric structures, it is essential to exploit equivalence in order to understand the large solution space. \changedOne{In particular, we demonstrated that the FE and NC methods both perform well in capturing equivalence present in the problem enabling the greatest compression of the solution space.} We showed our results apply to standard subgraph solvers by integrating our methods into the state-of-the-art solver Glasgow and extended our methods to the more complex problem spaces of multiplex multigraphs.

Future directions for this research include adapting these notions of equivalence to inexact search as well as producing inexact forms of equivalence. We would also like to better understand how to incorporate automorphic equivalence with the different notions of equivalence discussed in this paper. 

\section*{Acknowledgements}
This  material  is  based  on  research  sponsored  by  the Air Force Research Laboratory and DARPA under agreement number FA8750-18-2-0066. The U.S. Government is  authorized  to  reproduce  and  distribute  reprints  for Governmental  purposes  notwithstanding  any  copyright notation  thereon.  The  views  and  conclusions  contained herein  are  those  of  the  authors  and  should  not  be  interpreted as necessarily representing the official policies or  endorsements,  either  expressed  or  implied,  of  the Air Force Research Laboratory and DARPA or the U.S.Government.

This work was also supported by NSF Grant DMS-2027277.

\printbibliography

@INPROCEEDINGS{GORDIAN, 
author={K. {Karra} and S. {Swarup} and J. {Graham}}, 
booktitle={2018 IEEE International Conference on Big Data (Big Data)}, 
title={An Empirical Assessment of the Complexity and Realism of Synthetic Social Contact Networks}, 
year={2018}, 
volume={}, 
number={}, 
pages={3959-3967}, 
keywords={complex networks;data integration;demography;graph theory;network theory (graphs);social sciences computing;empirical assessment;synthetic social contact networks;graph complexity;synthetically-generated networks;human activity;empirical networks;human populations;network models;structural measures;data integration;Sociology;Statistics;Complexity theory;Entropy;Biological system modeling;Transportation;Social network services}, 
%doi={10.1109/BigData.2018.8622199}, 
ISSN={}, 
month={10}}

@INPROCEEDINGS{IVYSYS, 
author={K. O. {Babalola} and O. B. {Jennings} and E. {Urdiales} and J. A. {DeBardelaben}}, 
booktitle={2018 IEEE International Conference on Big Data (Big Data)}, 
title={Statistical Methods for Generating Synthetic Email Data Sets}, 
year={2018}, 
volume={}, 
number={}, 
pages={3986-3990}, 
keywords={electronic mail;statistical analysis;transaction processing;synthetic transaction data generator simulator application;synthetically-generated datasets;email transactions;corporations;agent-based STDG application;Enron reference dataset;statistical network;network structure;fictitious corporate structure;synthetic email threads;Enron dataset;statistical measures;Enron email dataset;synthetic email dataset;statistical methods;Electronic mail;Data models;Business;Computational modeling;Network topology;Data mining;Enron dataset;synthetic data generation;agent-based;Power law;Poisson processes}, 
%doi={10.1109/BigData.2018.8622601}, 
ISSN={}, 
month={10}}

@INPROCEEDINGS{PNNL, 
author={J. A. {Cottam} and S. {Purohit} and P. {Mackey} and G. {Chin}}, 
booktitle={2018 IEEE International Conference on Big Data (Big Data)}, 
title={Multi-Channel Large Network Simulation Including Adversarial Activity}, year={2018}, 
volume={}, 
number={}, 
pages={3947-3950}, 
keywords={data privacy;graph theory;network theory (graphs);search problems;social network theory;network simulation;adversarial search problems;privacy preservation;homogeneous networks;adversarial activity;adversarial signal;social networks;single-channel networks;multichannel network generation;synthetic activity graphs;subject matter experts;SMEs;Generators;Procurement;Data models;Communication networks;Sociology;Statistics;Big Data;Network simulation;Multi-channel simulation;Adversarial activity;Graph generative models}, 
%doi={10.1109/BigData.2018.8622305}, 
ISSN={}, 
month={10}}

@inproceedings{moorman2018filtering,
  title={Filtering Methods for Subgraph Matching on Multiplex Networks},
  author={Moorman, Jacob D and Chen, Qinyi and Tu, Thomas K and Boyd, Zachary M and Bertozzi, Andrea L},
  booktitle={2018 IEEE International Conference on Big Data (Big Data)},
  pages={3980--3985},
  year={2018},
  organization={IEEE}
}

@article{ren2015exploiting,
  title={Exploiting vertex relationships in speeding up subgraph isomorphism over large graphs},
  author={Ren, Xuguang and Wang, Junhu},
  journal={Proceedings of the VLDB Endowment},
  volume={8},
  number={5},
  pages={617--628},
  year={2015},
  publisher={VLDB Endowment}
}

@inproceedings{Han2013TurboisoTU,
  title={Turboiso: towards ultrafast and robust subgraph isomorphism search in large graph databases},
  author={Wook-Shin Han and Jinsoo Lee and Jeong-hoon Lee},
  booktitle={SIGMOD Conference},
  year={2013}
}

@article{solnon2010alldifferent,
  title={Alldifferent-based filtering for subgraph isomorphism},
  author={Solnon, Christine},
  journal={Artificial Intelligence},
  volume={174},
  number={12-13},
  pages={850--864},
  year={2010},
  publisher={Elsevier}
}

@inproceedings{bi2016efficient,
  title={Efficient subgraph matching by postponing cartesian products},
  author={Bi, Fei and Chang, Lijun and Lin, Xuemin and Qin, Lu and Zhang, Wenjie},
  booktitle={Proceedings of the 2016 International Conference on Management of Data},
  pages={1199--1214},
  year={2016},
  organization={ACM}
}

@inproceedings{carletti2017introducing,
  title={Introducing {VF3}: A new algorithm for subgraph isomorphism},
  author={Carletti, Vincenzo and Foggia, Pasquale and Saggese, Alessia and Vento, Mario},
  booktitle={International Workshop on Graph-Based Representations in Pattern Recognition},
  pages={128--139},
  year={2017},
  organization={Springer}
}

@article{ullmann1976algorithm,
  title={An algorithm for subgraph isomorphism},
  author={Ullmann, JR},
  journal={Journal of the ACM (JACM)},
  volume={23},
  number={1},
  pages={31--42},
  year={1976},
  publisher={ACM}
}

@article{cordella2004sub,
  title={A (sub) graph isomorphism algorithm for matching large graphs},
  author={Cordella, Luigi P and Foggia, Pasquale and Sansone, Carlo and Vento, Mario},
  journal={IEEE Trans. Patt. Anal. Mach. Int.},
  volume={26},
  number={10},
  pages={1367--1372},
  year={2004},
  publisher={IEEE}
}

@article{foggia2014graph,
  title={Graph matching and learning in pattern recognition in the last 10 years},
  author={Foggia, Pasquale and Percannella, Gennaro and Vento, Mario},
  journal={Int. J. of Pattern Recognition and Artificial Intelligence},
  volume={28},
  number={01},
  pages={1450001},
  year={2014},
  publisher={World Scientific}
}

@article{mccreesh2018subgraph,
  title={When subgraph isomorphism is really hard, and why this matters for graph databases},
  author={McCreesh, Ciaran and Prosser, Patrick and Solnon, Christine and Trimble, James},
  journal={Journal of Artificial Intelligence Research},
  volume={61},
  pages={723--759},
  year={2018}
}

@InProceedings{kothoff2016,
author="Kotthoff, Lars
and McCreesh, Ciaran
and Solnon, Christine",
editor="Festa, Paola
and Sellmann, Meinolf
and Vanschoren, Joaquin",
title="Portfolios of Subgraph Isomorphism Algorithms",
booktitle="Learning and Intelligent Optimization",
year="2016",
publisher="Springer International Publishing",
address="Cham",
pages="107--122",
abstract="Subgraph isomorphism is a computationally challenging problem with important practical applications, for example in computer vision, biochemistry, and model checking. There are a number of state-of-the-art algorithms for solving the problem, each of which has its own performance characteristics. As with many other hard problems, the single best choice of algorithm overall is rarely the best algorithm on an instance-by-instance. We develop an algorithm selection approach which leverages novel features to characterise subgraph isomorphism problems and dynamically decides which algorithm to use on a per-instance basis. We demonstrate substantial performance improvements on a large set of hard benchmark problems. In addition, we show how algorithm selection models can be leveraged to gain new insights into what affects the performance of an algorithm.",
%isbn="978-3-319-50349-3"
}

@inproceedings{mccreesh2015parallel,
  title={A parallel, backjumping subgraph isomorphism algorithm using supplemental graphs},
  author={McCreesh, Ciaran and Prosser, Patrick},
  booktitle={International conference on principles and practice of constraint programming},
  pages={295--312},
  year={2015},
  organization={Springer}
}

@inproceedings{mccreesh2020glasgow,
  title={The Glasgow subgraph solver: using constraint programming to tackle hard subgraph isomorphism problem variants},
  author={McCreesh, Ciaran and Prosser, Patrick and Trimble, James},
  booktitle={International Conference on Graph Transformation},
  pages={316--324},
  year={2020},
  organization={Springer}
}

@article{zampelli2010solving,
  title={Solving subgraph isomorphism problems with constraint programming},
  author={Zampelli, St{\'e}phane and Deville, Yves and Solnon, Christine},
  journal={Constraints},
  volume={15},
  number={3},
  pages={327--353},
  year={2010},
  publisher={Springer}
}

@inproceedings{audemard2014scoring,
  title={Scoring-based neighborhood dominance for the subgraph isomorphism problem},
  author={Audemard, Gilles and Lecoutre, Christophe and Samy-Modeliar, Mouny and Goncalves, Gilles and Porumbel, Daniel},
  booktitle={Int. Conf. on Principles and Practice of Constraint Programming},
  pages={125--141},
  year={2014},
  organization={Springer}
}

@book{garey2002computers,
  title={Computers and intractability},
  author={Garey, Michael R and Johnson, David S},
  volume={29},
  year={2002},
  publisher={W.H. Freeman},
  address={New York}
}

@inproceedings{fan2012graph,
  title={Graph pattern matching revised for social network analysis},
  author={Fan, Wenfei},
  booktitle={Proceedings of the 15th International Conference on Database Theory},
  pages={8--21},
  year={2012},
  organization={ACM}
}

@article{moorman2021subgraph,
  title={Subgraph Matching on Multiplex Networks},
  author={Moorman, Jacob D and Tu, Thomas and Chen, Qinyi and He, Xie and Bertozzi, Andrea},
  journal={IEEE Transactions on Network Science and Engineering},
  year={2021},
  volume=8,
  number=2, 
  pages ="1367-1384",
  publisher={IEEE}
}

@inproceedings{solnon2019experimental,
  title={Experimental evaluation of subgraph isomorphism solvers},
  author={Solnon, Christine},
  booktitle={International Workshop on Graph-Based Representations in Pattern Recognition},
  pages={1--13},
  year={2019},
  organization={Springer}
}

@article{gay2014subgraph,
  title={On the subgraph epimorphism problem},
  author={Gay, Steven and Fages, Fran{\c{c}}ois and Martinez, Thierry and Soliman, Sylvain and Solnon, Christine},
  journal={Discrete Applied Mathematics},
  volume={162},
  pages={214--228},
  year={2014},
  publisher={Elsevier}
}

@article{solnon2015complexity,
  title={On the complexity of submap isomorphism and maximum common submap problems},
  author={Solnon, Christine and Damiand, Guillaume and De La Higuera, Colin and Janodet, Jean-Christophe},
  journal={Pattern Recognition},
  volume={48},
  number={2},
  pages={302--316},
  year={2015},
  publisher={Elsevier}
}

@article{damiand2011polynomial,
  title={Polynomial algorithms for subisomorphism of nd open combinatorial maps},
  author={Damiand, Guillaume and Solnon, Christine and De la Higuera, Colin and Janodet, Jean-Christophe and Samuel, {\'E}milie},
  journal={Computer Vision and Image Understanding},
  volume={115},
  number={7},
  pages={996--1010},
  year={2011},
  publisher={Elsevier}
}

@article{houbraken2014index,
  title={The Index-based Subgraph Matching Algorithm with General Symmetries (ISMAGS): exploiting symmetry for faster subgraph enumeration},
  author={Houbraken, Maarten and Demeyer, Sofie and Michoel, Tom and Audenaert, Pieter and Colle, Didier and Pickavet, Mario},
  journal={PloS one},
  volume={9},
  number={5},
  pages={e97896},
  year={2014},
  publisher={Public Library of Science}
}

@article{demeyer2013index,
  title={The index-based subgraph matching algorithm (ISMA): fast subgraph enumeration in large networks using optimized search trees},
  author={Demeyer, Sofie and Michoel, Tom and Fostier, Jan and Audenaert, Pieter and Pickavet, Mario and Demeester, Piet},
  journal={PloS one},
  volume={8},
  number={4},
  pages={e61183},
  year={2013},
  publisher={Public Library of Science}
}

@incollection{auer2007dbpedia,
  title={Dbpedia: A nucleus for a web of open data},
  author={Auer, S{\"o}ren and Bizer, Christian and Kobilarov, Georgi and Lehmann, Jens and Cyganiak, Richard and Ives, Zachary},
  booktitle={The semantic web},
  pages={722--735},
  year={2007},
  publisher={Springer}
}

@article{micale2020multiri,
  title={Multiri: Fast subgraph matching in labeled multigraphs},
  author={Micale, Giovanni and Bonnici, Vincenzo and Ferro, Alfredo and Shasha, Dennis and Giugno, Rosalba and Pulvirenti, Alfredo},
  journal={arXiv preprint arXiv:2003.11546},
  year={2020}
}

@article{bonnici2013subgraph,
  title={A subgraph isomorphism algorithm and its application to biochemical data},
  author={Bonnici, Vincenzo and Giugno, Rosalba and Pulvirenti, Alfredo and Shasha, Dennis and Ferro, Alfredo},
  journal={BMC bioinformatics},
  volume={14},
  number={7},
  pages={1--13},
  year={2013},
  publisher={BioMed Central}
}

@inproceedings{ingalalli2016sumgra,
  title={SuMGra: Querying multigraphs via efficient indexing},
  author={Ingalalli, Vijay and Ienco, Dino and Poncelet, Pascal},
  booktitle={International Conference on Database and Expert Systems Applications},
  pages={387--401},
  year={2016},
  organization={Springer}
}

@inproceedings{liu2019g,
  title={G-Finder: Approximate Attributed Subgraph Matching},
  author={Liu, Lihui and Du, Boxin and Tong, Hanghang and others},
  booktitle={2019 IEEE International Conference on Big Data (Big Data)},
  pages={513--522},
  year={2019},
  organization={IEEE}
}

@article{emmert2016fifty,
  title={Fifty years of graph matching, network alignment and network comparison},
  author={Emmert-Streib, Frank and Dehmer, Matthias and Shi, Yongtang},
  journal={Information sciences},
  volume={346},
  pages={180--197},
  year={2016},
  publisher={Elsevier}
}

@ARTICLE{wiskott1997,
  author={Wiskott, L. and Krüger, Norbert and Kuiger, N. and von der Malsburg, C.},
  journal={IEEE Trans. on Patt. Anal. and Mach. Int.}, 
  title={Face recognition by elastic bunch graph matching}, 
  year={1997},
  volume={19},
  number={7},
  pages={775-779},
  %doi={10.1109/34.598235}
  }

@Inbook{Conte2007,
author="Conte, Donatello
and Foggia, Pasquale
and Sansone, Carlo
and Vento, Mario",
editor="Kandel, Abraham
and Bunke, Horst
and Last, Mark",
title="How and Why Pattern Recognition and Computer Vision Applications Use Graphs",
bookTitle="Applied Graph Theory in Computer Vision and Pattern Recognition",
year="2007",
publisher="Springer Berlin Heidelberg",
%address="Berlin, Heidelberg",
pages="85--135",
%isbn="978-3-540-68020-8",
%doi="10.1007/978-3-540-68020-8_4",
%url="https://doi.org/10.1007/978-3-540-68020-8_4"
}

@inproceedings{dumay1992consistent,
  title={Consistent inexact graph matching applied to labelling coronary segments in arteriograms},
  author={Dumay, Adrie CM and van der Geest, Rob J and Gerbrands, Jan J and Jansen, Eric and Reiber, Johan HC},
  booktitle={11th IAPR International Conference on Pattern Recognition. Vol. III. Conference C: Image, Speech and Signal Analysis,},
  volume={1},
  pages={439--442},
  year={1992},
  organization={IEEE Computer Society}
}

@article{sanfeliu1983distance,
  title={A distance measure between attributed relational graphs for pattern recognition},
  author={Sanfeliu, Alberto and Fu, King-Sun},
  journal={IEEE transactions on systems, man, and cybernetics},
  number={3},
  pages={353--362},
  year={1983},
  publisher={IEEE}
}

@article{micale2018fast,
  title={Fast analytical methods for finding significant labeled graph motifs},
  author={Micale, Giovanni and Giugno, Rosalba and Ferro, Alfredo and Mongiovi, Misael and Shasha, Dennis and Pulvirenti, Alfredo},
  journal={Data Mining and Knowledge Discovery},
  volume={32},
  number={2},
  pages={504--531},
  year={2018},
  publisher={Springer}
}

@incollection{ribeiro2014discovering,
  title={Discovering colored network motifs},
  author={Ribeiro, Pedro and Silva, Fernando},
  booktitle={Complex Networks V},
  pages={107--118},
  year={2014},
  publisher={Springer}
}

@article{wang2020covid,
  title={COVID-19 literature knowledge graph construction and drug repurposing report generation},
  author={Wang, Qingyun and Li, Manling and Wang, Xuan and Parulian, Nikolaus and Han, Guangxing and Ma, Jiawei and Tu, Jingxuan and Lin, Ying and Zhang, Haoran and Liu, Weili and others},
  journal={arXiv preprint arXiv:2007.00576},
  year={2020}
}

@article{davis2021ctd,
  title={CTD Anatomy: analyzing chemical-induced phenotypes and exposures from an anatomical perspective, with implications for environmental health studies},
  author={Davis, Allan Peter and Wiegers, Thomas C and Wiegers, Jolene and Grondin, Cynthia J and Johnson, Robin J and Sciaky, Daniela and Mattingly, Carolyn J},
  journal={Current research in toxicology},
  volume={2},
  pages={128--139},
  year={2021},
  publisher={Elsevier}
}

@ARTICLE {zucker2021leveraging,
author = {J. Zucker and K. Paneri and S. Mohammad-Taheri and S. Bhargava and P. Kolambkar and C. Bakker and J. Teuton and C. Hoyt and K. Oxford and R. Ness and O. Vitek},
journal = {IEEE Trans. on Big Data},
title = {Leveraging Structured Biological Knowledge for Counterfactual Inference: A Case Study of Viral Pathogenesis},
year = {2021},
volume = {7},
number = {01},
%issn = {2332-7790},
pages = {25-37},
keywords = {biological system modeling;proteins;mathematical model;stochastic processes;biological processes;data models;covid-19},
%doi = {10.1109/TBDATA.2021.3050680},
publisher = {IEEE Computer Society},
address = {Los Alamitos, CA, USA},
month = {1}
}

@article{wang2020cord,
  title={Cord-19: The covid-19 open research dataset},
  author={Wang, Lucy Lu and Lo, Kyle and Chandrasekhar, Yoganand and Reas, Russell and Yang, Jiangjiang and Eide, Darrin and Funk, Kathryn and Kinney, Rodney and Liu, Ziyang and Merrill, William and others},
  journal={ArXiv},
  year={2020},
  publisher={ArXiv}
}

@article{inexact20,
title={Inexact attributed subgraph matching},
author={Thomas K. Tu and Jacob D. Moorman and Dominic Yang and Qinyi Chen and Andrea L. Bertozzi},
journal={Proc. IEEE Cong. BIG DATA, Graph Techniques for Adversarial Activity Analytics (GTA3 4.0) workshop},
year={2020},
pages={2575-2582}
}

@article{de2013anatomy,
  title={The anatomy of a scientific rumor},
  author={De Domenico, Manlio and Lima, Antonio and Mougel, Paul and Musolesi, Mirco},
  journal={Scientific reports},
  volume={3},
  number={1},
  pages={1--9},
  year={2013},
  publisher={Nature Publishing Group}
}

@article{cardillo2013emergence,
  title={Emergence of network features from multiplexity},
  author={Cardillo, Alessio and G{\'o}mez-Gardenes, Jes{\'u}s and Zanin, Massimiliano and Romance, Miguel and Papo, David and Del Pozo, Francisco and Boccaletti, Stefano},
  journal={Scientific reports},
  volume={3},
  number={1},
  pages={1--6},
  year={2013},
  publisher={Nature Publishing Group}
}

@inproceedings{nguyen2019applications,
  title={Applications of structural equivalence to subgraph isomorphism on multichannel multigraphs},
  author={Nguyen, Thien and Yang, Dominic and Ge, Yurun and Li, Hao and Bertozzi, Andrea L},
  booktitle={2019 IEEE International Conference on Big Data (Big Data)},
  pages={4913--4920},
  year={2019},
  organization={IEEE}
}

@ARTICLE{MatchedFilter2019, 
author={D. {Sussman} and Y. {Park} and C. E. {Priebe} and V. {Lyzinski}}, 
journal={IEEE Transactions on Pattern Analysis and Machine Intelligence}, 
title={Matched Filters for Noisy Induced Subgraph Detection}, 
year={2019}, 
volume={}, 
number={}, 
pages={1-1}, 
keywords={Noise measurement;Approximation algorithms;Correlation;Social networking (online);Computer vision;Stochastic processes;Symmetric matrices;multiple graph inference;subgraph detection;graph matching}, 
doi={10.1109/TPAMI.2019.2914651}, 
ISSN={0162-8828}, 
month={},}

@inproceedings{jin2019noisy,
  title={Noisy Subgraph Isomorphisms on Multiplex Networks},
  author={Jin, Hui and He, Xie and Wang, Yanghui and Li, Hao and Bertozzi, Andrea L},
  booktitle={2019 IEEE International Conference on Big Data (Big Data)},
  pages={4899--4905},
  year={2019},
  organization={IEEE}
}

@INPROCEEDINGS{InexactHRL2019,
  author={A. {Kopylov} and J. {Xu}},
  booktitle={2019 IEEE International Conference on Big Data (Big Data)}, 
  title={Filtering Strategies for Inexact Subgraph Matching on Noisy Multiplex Networks}, 
  year={2019},
  volume={},
  number={},
  pages={4906-4912},}

@inproceedings{MLIndex,
  title={Similarity search in graph databases: A multi-layered indexing approach},
  author={Liang, Yongjiang and Zhao, Peixiang},
  booktitle={2017 IEEE 33rd International Conference on Data Engineering (ICDE)},
  pages={783--794},
  year={2017},
  organization={IEEE}
}

@article{albert1999diameter,
  title={Diameter of the world-wide web},
  author={Albert, R{\'e}ka and Jeong, Hawoong and Barab{\'a}si, Albert-L{\'a}szl{\'o}},
  journal={Nature},
  volume={401},
  number={6749},
  pages={130--131},
  year={1999},
  publisher={Nature Publishing Group}
}
\begin{IEEEbiography}[{\includegraphics[width=1in,height=1.25in,clip,keepaspectratio]{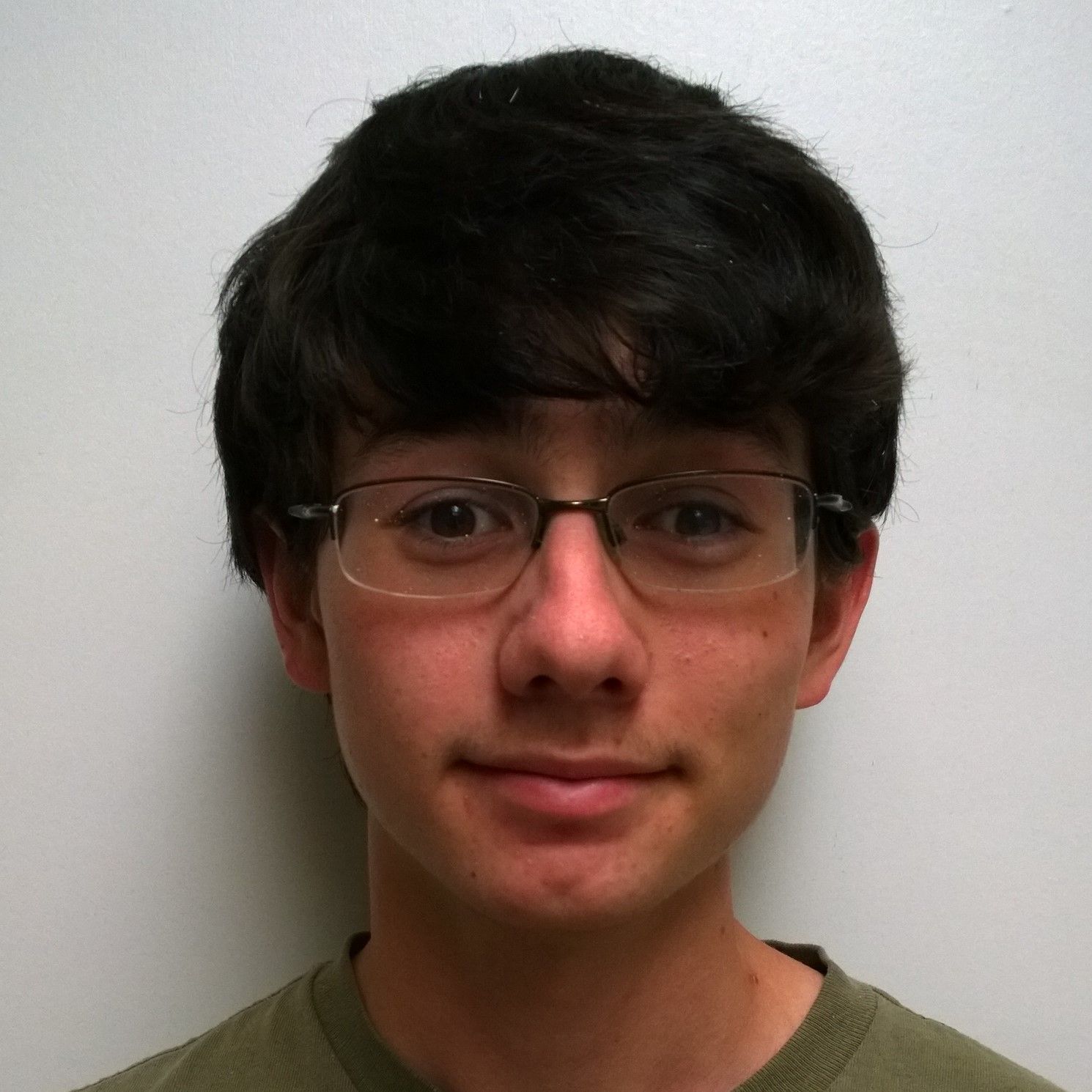}}]{Dominic Yang} is a Ph.D. candidate in the Department of Mathematics at the University of California, Los Angeles. He received B.S. degrees in Applied Mathematics and Computer Science from University of California, Davis in 2018. His research interests include graph algorithms, combinatorial optimization, mixed integer programming, and deep learning.
\end{IEEEbiography}

\begin{IEEEbiography}[{\includegraphics[width=1in,height=1.25in,clip,keepaspectratio]{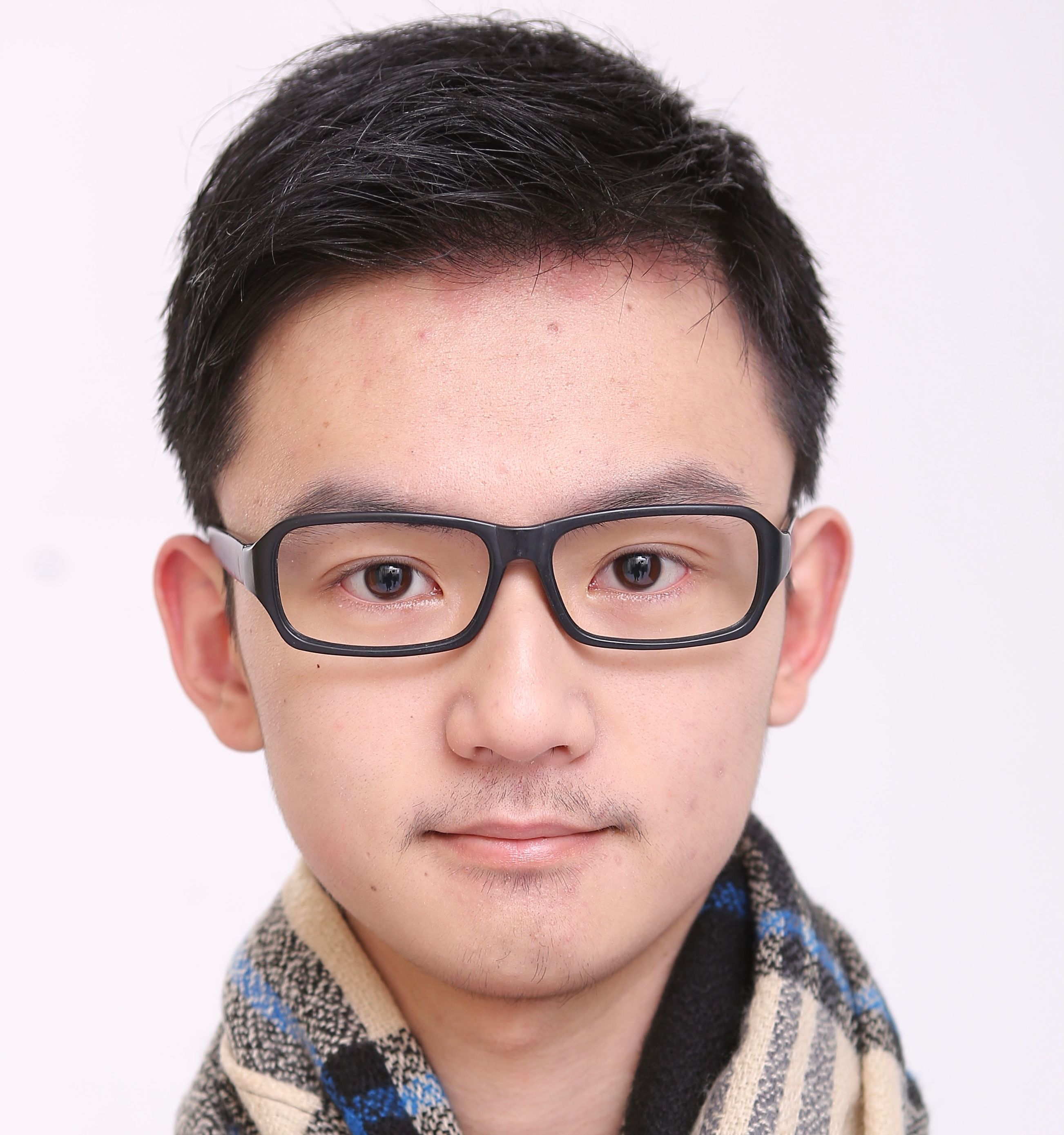}}]{Yurun Ge} is a Ph.D. candidate in the Department of Mathematics at the University of California, Los Angeles. He received his B.S. degree in Mathematics from Shanghai Jiaotong University in 2018. His research interest include graph algorithms and network analysis.
\end{IEEEbiography}

\begin{IEEEbiography}[{\includegraphics[width=1in,height=1.25in,clip,keepaspectratio]{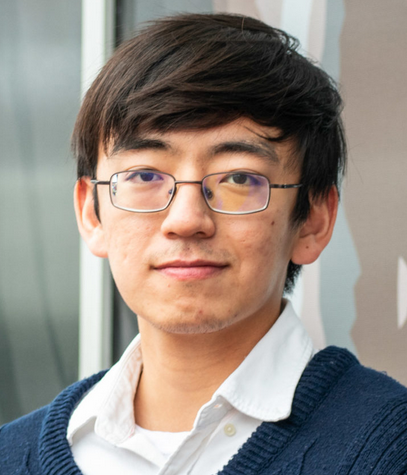}}]{Thien Nguyen} is a Ph.D. candidate in the Department of Computer Science at Northeastern University. He received his B.S. degree in Computer Science from University of California, Irvine in 2020. His research interests include  optimization and algorithms for robust learning. 
\end{IEEEbiography}

\begin{IEEEbiography}[{\includegraphics[width=1in,height=1.25in,clip,keepaspectratio]{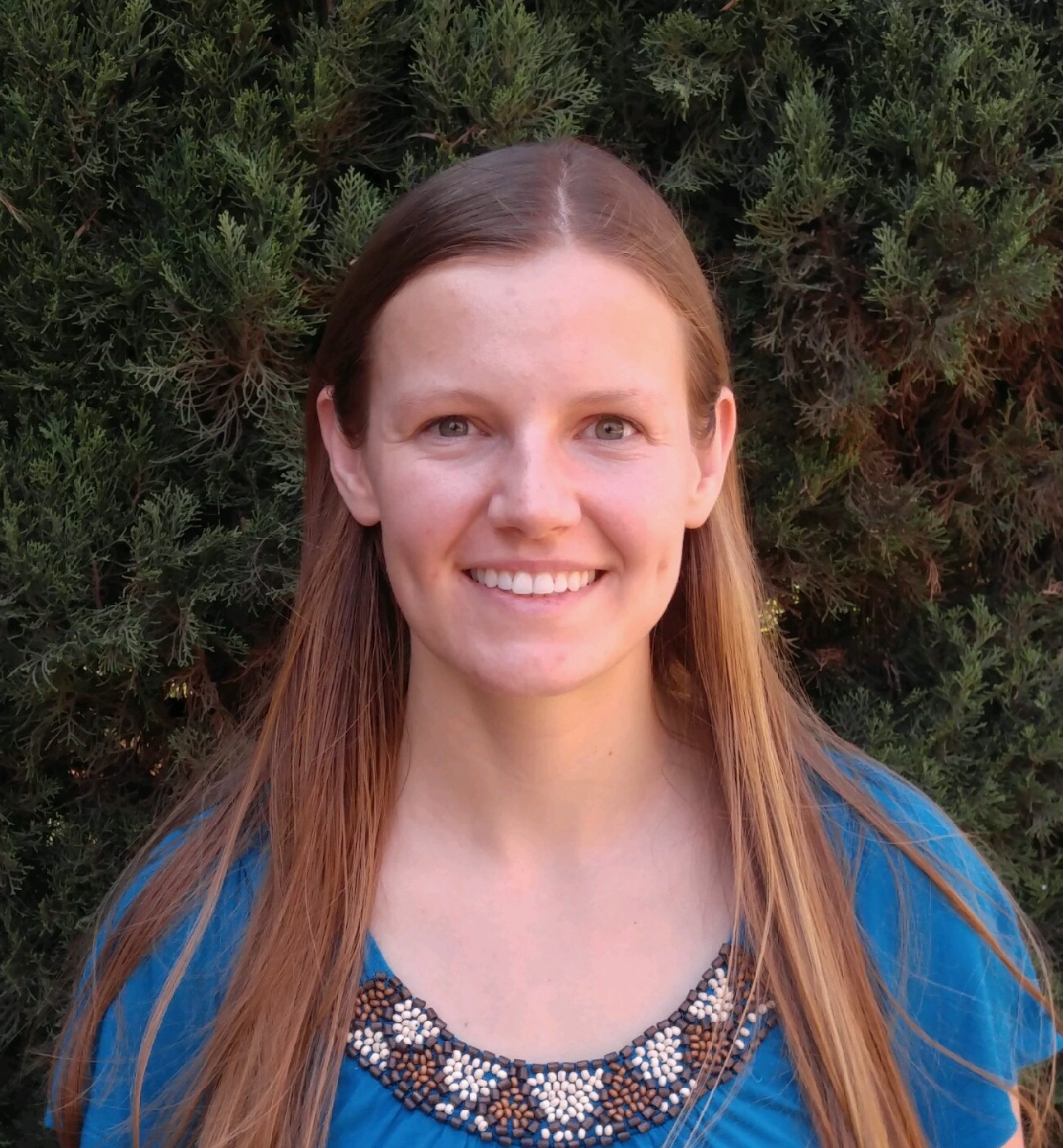}}]{Denali Molitor} completed her PhD in Mathematics at the University of California, Los Angeles in 2021. She received her B.A. degree in Mathematics from Colorado College in 2014. Her research interests include numerical linear algebra and stochastic optimization.
\end{IEEEbiography}

\begin{IEEEbiography}[{\includegraphics[width=1in,height=1.25in,clip,keepaspectratio]{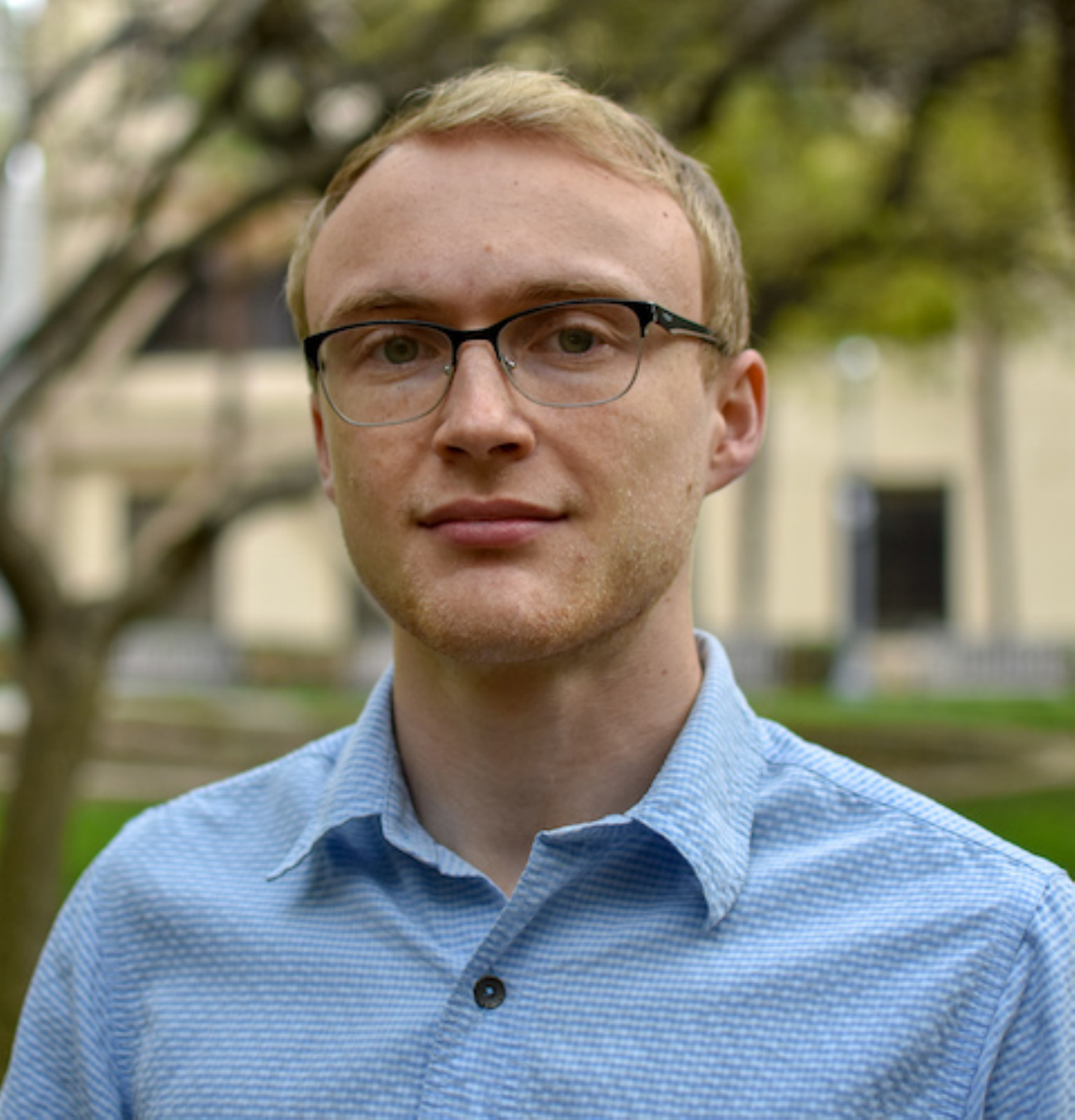}}]{Jacob D. Moorman} completed his PhD in Mathematics at the University of California, Los Angeles in 2021. He received B.S. degrees in Mathematics and Computer Science from the New Jersey Institute of Technology in 2016. His research interests include network analysis, linear algebra, stochastic optimization algorithms, pattern recognition, and high-dimensional data analysis.
\end{IEEEbiography}

\begin{IEEEbiography}[{\includegraphics[width=1in,height=1.25in,clip,keepaspectratio]{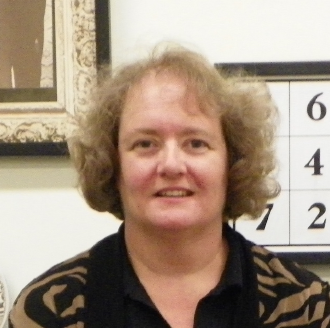}}]{Andrea L. Bertozzi, {\it Member, IEEE}} completed all her degrees in mathematics at Princeton. She is an Applied Mathematician with expertise in nonlinear partial differential equations and graphical models for machine learning. She was an L. E. Dickson Instructor and an NSF Postdoctoral Fellow at the University of Chicago from 1991 to 1995. She was the Maria Geoppert-Mayer Distinguished Scholar with the Argonne National Laboratory from 1995 to 1996. She was on the faculty at Duke University from 1995 to 2004, first as an Associate Professor of mathematics and then as a Professor of mathematics and physics. She has been on the faculty at UCLA since 2003 where she now holds the position of Distinguished Professor of Mathematics and Mechanical and Aerospace Engineering, along with the Betsy Wood Knapp Chair for Innovation and Creativity. She was elected to the American Academy of Arts and Sciences in 2010 and to the Fellows of the Society of Industrial and Applied Mathematics (SIAM) in 2010. She became a Fellow of the American Mathematical Society in 2013 and the American Physical Society in 2016. In 2018, she was elected to the U.S. National Academy of Sciences. Her honors include the Sloan Research Fellowship in 1995, the Presidential Early Career Award for Scientists and Engineers in 1996, and the SIAM Kovalevsky Prize in 2009, and the SIAM Kleinman Prize in 2019. She received the SIAM Outstanding Paper Prize in 2014 with A. Flenner, for her work on geometric graphbased algorithms for machine learning. She received the Simons Math + X Investigator Award in 2017.  
\end{IEEEbiography}

\end{document}